\newtheorem{theorem}{Theorem}
\newtheorem{proposition}[theorem]{Proposition}
\newtheorem{lemma}[theorem]{Lemma}
\newtheorem{corollary}[theorem]{Corollary}
\newtheorem{example}{Example}
\newtheorem{definition}{Definition}
\newcommand{\adom}{\textit{adom}}
\renewcommand{\leq}{\leqslant} 
\renewcommand{\geq}{\geqslant}
\renewcommand{\phi}{\varphi}
\newtheorem{question}[theorem]{Question}
\newcommand{\D}{\mathcal D}
\newcommand{\N}{\mathcal N}
\newcommand{\G}{\mathcal G}
\renewcommand{\L}{\mathcal L}
\newcommand{\F}{\mathcal F}
\newcommand{\myi}{{(i)}\xspace}
\newcommand{\myii}{{(ii)}\xspace}
\newcommand{\argmin}{\operatornamewithlimits{argmin}}
\begin{document}

\title{Database Aggregation}

\author[1]{Francesco Belardinelli}
\author[2]{Umberto Grandi}

\affil[1]{IBISC, Universit\'e d'Evry, France}
\affil[2]{IRIT, University of Toulouse, France}

\maketitle

\begin{abstract}
Knowledge can be represented compactly in a multitude ways, from a set
of propositional formulas, to a Kripke model, to a database. In this
paper we study the aggregation of information coming from multiple
sources, each source submitting a database modelled as a first-order
relational structure. In the presence of an integrity constraint, we
identify classes of aggregators that respect it in the aggregated
database, provided all individual databases satisfy it. We also
characterise languages for first-order queries on which the answer to
queries on the aggregated database coincides with the aggregation of
the answers to the query obtained on each individual database.  This
contribution is meant to be a first step on the application of
techniques from rational choice theory to knowledge representation in
databases.
%
%
\end{abstract}

\section{Introduction}

Aggregating information coming from multiple sources is a
long-standing problem in both knowledge representation and the study
of multi-agent systems (see, e.g., \cite{vanHarmelen2007}).  Depending
on the chosen representation for the incoming pieces of knowledge or
information, a number of competing approaches has seen the light in
these literatures.  Belief
merging \cite{LiberatoreSchaerf1998,KoniecznyPinoPerezJLC2002,KoniecznyLangMarquisAIJ2004}
studies the problem of aggregating propositional formulas coming from
a number of different agents into a set of models, subject to an
integrity constraint.  Judgment and binary
aggregation \cite{EndrissHBCOMSOC2016,DokowHolzmanJET2010,GrandiEndrissAIJ2013}
asks individual agents to report yes/no opinions on a set of logically
connected binary issues, called the agenda, to take a collective
decision.  Social welfare functions, the cornerstone problem in social
choice theory (see, e.g., \cite{Arrow1963}), can also be viewed as
mechanisms merging conflicting information, namely the individual
preferences of voters expressed in the form of linear orders over a
set of alternatives.  Other examples include graph
aggregation \cite{EndrissGrandiAIJ2017}, which has applications in
multi-agent
argumentation \cite{BoothEtAlKR2014,CaminadaPigozzi2011,ChenEndrissTARK2017}
and clustering aggregation \cite{GionisEtAlTKDD2007}, as well as
ontology merging \cite{Porello2014}.

In this work we take a general perspective and we represent individual
knowledge coming from multiple sources as a profile of databases,
modelled as finite relational
structures \cite{AbiteboulHV95,MaierUV84}.  Our aim is to reconcile
two possibly conflicting views of the problem of information fusion.
On the one hand, the study of information merging (typically knowledge
or beliefs) in knowledge representation has focused on the design of
rules that guarantee the consistency of the outcome, with the main
driving principles inspired from the literature on belief
revision.\footnote{Albeit we acknowledge the work
of \cite{DoyleWellmanAIJ1991, MaynardZhangLehmannJAIR2003}, which
aggregate individual beliefs, modelled as plausibility orders, in an
"Arrovian" fashion.}.  On the other hand, social choice theory has
focused on agent-based properties, such as fairness and
representativity of an aggregation procedure, paying attention as well
on possible strategic behaviour by either the agents involved in the
process or an external influencing source.  While there already have
been several attempts at showing how specific merging or aggregation
frameworks could be simulated or subsumed by one another (see,
e.g., \cite{GrandiEndrissIJCAI2011,DietrichList2007a,GregoireKonieczny2006,EveraereEtAl2015}),
we believe that a more general perspective is needed to reconcile the
two views described above.  Perhaps the closest approach to ours is
the work of Baral \emph{et al.} \cite{BaralEtAl1992}.  In their paper,
the authors consider the problem of merging information represented in
the form of a first-order theory, taking a syntactic rather than a
semantic approach, and focuses on finding maximally consistent sets of
the union of the individual theories received.  In doing so, however,
the authors privilege the knowledge representation approach, and have
no control on the set of agents supporting a given maximally
consistent set rather than another.

Our starting point is a set of finite relational structures on the
same signature, coming from a set of agents or sources, and our
research problem is how to obtain a collective databases summarising
the information received. Virtually all of the settings mentioned
above (beliefs, graphs, preferences, judgments...) can be represented
as databases, showing the generality of our framework.  We propose a
number of rules for database aggregation, inspired by existing ones
proposed in the literature on computational social choice, and we
evaluate them axiomatically.  We privilege computationally friendly
aggregators, for which the time to determine the collective outcome is
polynomial in the time spent reading the individual input received.  

When integrity constraints are present, we study how to guarantee that a given aggregators ``lifts'' the integrity constraint from the individual to the collective level, i.e., the
aggregated databases satisfies the same constraints as the individual
ones. We first analyse the problem of lifting first-order formulas in database
aggregation theoretically, comparing the results obtained with the literature on
lifting propositional constraints in binary aggregation.
We provide characterisation results for a number of natural restricted languages, and we investigate
which of the rules we introduced lift classical integrity constraints from database theory: 
functional dependencies, referential integrity constraints, and value constraints.

Since databases are typically queried using formulas in first
order logic, a natural question to ask in a multi-agent setting is
whether the aggregation of the individual answers to a query coincides
with the answer to the same query on the aggregated database.
We provide a partial answer to this important problem, by identifying sufficient conditions on the
first-order query language for both the intersection and the union
operator. 

The paper is organised as follows. In Section~\ref{sec:preliminaries}
we introduce the basic definitions of databases and integrity
constraints. In Sections~\ref{aggregators} and~\ref{sec:axioms} we introduce a number of
database aggregation procedures, and we propose axiomatic properties
for their studies. Sections~\ref{sec:lifting}, ~\ref{sec:characterisations}, and ~\ref{sec:queries} contains our main results on the lifting of integrity constraints and aggregated query
answering.  Section~\ref{sec:conclusions} concludes the paper.

\section{Preliminaries on Databases}\label{sec:preliminaries}

In this section we introduce basic notions on databases that we will
use in the rest of the paper. In particular, we adopt a relational
perspective~\cite{AbiteboulHV95} and define a database as a finite
relational structure over a database schema:
\begin{definition}[Database Schema]
A {\em (relational) database schema} $\D$ is a finite set $\{ P_1 /
q_1,\dots,P_n / q_n \}$ of relation symbols $P$ with arity $q \in
\mathbb{N}$.
\end{definition}

In the following we assume a countable domain $U$ of elements $u, u',
\ldots$, for the interpretation of relation symbols in a database
schema $\D$.
\begin{definition}[Database Instance] \label{dbinstance}
Given domain $U$ and database schema
$\D$, a {\em $\D$-instance} over $U$ is a mapping $D$ associating each
relation symbol $P \in \D$ with a finite $q$-ary relation over $U$,
i.e., $D(P) \underset{{\small fin}}{\subset} U^{q}$.
\end{definition}

By Def.~\ref{dbinstance} a database instance is a finite (relational)
model of a database schema.  The {\em active domain} $\adom(D)$ of an
instance $D$ is the set of all individuals in $U$ occurring in some
tuple $\vec{u}$ of some predicate interpretation $D(P)$,
that is, $\adom(D) = \bigcup_{P \in \D} \{ u \in U \mid u = u_i \text{
  for some } \vec{u} \in D(P) \}$.  Observe that, since $\D$ contains
a finite number of relation symbols and each $D(P)$ is finite, so is
$\adom(D)$. We denote the set of all instances on $\D$ and $U$ as
$\D(U)$. Clearly, the formal framework for databases we adopt is quite
simple, but still it is powerful enough to cover practical cases of
interest \cite{MaierUV84}. Here we do not discuss in details the pros and cons of the
relational approach to database theory and refer to the literature for
further details \cite{AbiteboulHV95}.

To specify the properties of databases, we make use of first-order
logic with equality and no function symbols.  Let $V$ be a countable
set of {\em individual variables}. 
\begin{definition}[FO-formulas over $\D$]\label{def:fo}
Given a database schema $\D$, the formulas $\varphi$ of the
first-order language $\L_{\D}$ are defined by the following BNF:
\begin{eqnarray*}
\varphi & ::= & x = x'\mid P(x_1, \ldots ,x_{q}) \mid \lnot \varphi
\mid \varphi \to \varphi \mid \forall x \varphi
\end{eqnarray*}
where $P \in \D$, $x_1, \ldots ,x_{q}$ is a $q$-tuple of terms and $x,
x' $ are terms.
\end{definition}
We assume ``$=$'' to be a special binary predicate with fixed obvious
interpretation. By Def.~\ref{def:fo}, $\L_\D$ is a first-order
language with equality over the relational vocabulary $\D$ and with no
function symbols.
%
In the following we use the standard abbreviations $\exists$, 
$\wedge$,
$\vee$, and $\neq$.
Also, free and bound variables are defined as standard.  For a formula
$\varphi \in \L_{\D}$,
we write $\varphi(x_1,\ldots,x_\ell)$, or simply $\varphi(\vec x)$, to
list explicitly in arbitrary order all free variables
$x_1,\ldots,x_\ell$ of $\varphi$.
A {\em sentence} is a formula with no free variables.  Notice that the
only terms in our language $\L_{\D}$ are individual variables. We can
add constant for individuals with some minor technical changes to the
definitions and results in the paper. However, these do not impact on
the theoretical contribution and we prefer to keep notation lighter.

To interpret FO-formulas on database instances, we introduce 
{\em assignments} as functions $\sigma: V \mapsto U$.
Given an assignment $\sigma$, we denote by $\sigma^x_u$ the 
assignment such that
\myi $\sigma^x_u(x) = u$; and \myii 
$\sigma^x_u(x') = \sigma(x')$, for every $x' \in V$
different from $x$.
We can now define the semantics of $\L_\D$.
\begin{definition}[Satisfaction of FO-formulas]\label{def:fo-sem}
Given a $\D$-instance $D$, an assignment $\sigma$, and an FO-formula
$\varphi\in\L_{\D}$, we inductively define whether $D$ \emph{satisfies
  $\varphi$ under $\sigma$}, or $ (D, \sigma) \models \varphi$, as
follows:
\begin{tabbing}
 $ (D, \sigma)\models P(x_1,\ldots,x_{q})$ \ \ \ \=
 iff \ \ \= $\langle \sigma(x_1),\ldots,\sigma(x_{q}) \rangle \in D(P)$\\ 
 $ (D, \sigma)\models x = x'$ \> iff \> $\sigma(x)=\sigma(x')$\\
 $ (D, \sigma)\models \lnot\varphi$ \> iff \> $(D, \sigma) \not \models\varphi$\\
 $ (D, \sigma)\models \varphi \to \psi$ \> iff \> $(D,\sigma) \not \models \varphi $ or $ (D, \sigma)\models\psi$\\
 $ (D, \sigma)\models \forall x\varphi$ \> iff \> for every
 $u\in \adom(D)$, $ (D, \sigma^x_u) \models\varphi$
\end{tabbing}

A formula $\varphi$ is {\em true} in $D$, written $D\models\varphi$,
iff $ (D, \sigma) \models \varphi$, for all assignments $\sigma$.
\end{definition}
Observe that we adopt an {\em active-domain} semantics, that is,
quantified variables range only over the active domain of $D$.  This
is standard in database theory \cite{AbiteboulHV95}, where $\adom(D)$
is assumed to be the ``universe of discourse''.\\

\textbf{Constraints.}
It is well-known that several properties and constraints on databases
can be expressed as FO-sentences. Here we consider some of these for
illustrative purposes.
\begin{definition}[Functional Dependency]\label{def:functionaldependency}
A {\em functional dependency} is an expression of type $n_1, \ldots,
n_k \mapsto n_{k+1}, \ldots, n_{q}$.  A database instance $D$
satisfies a functional dependency $n_1, \ldots, n_k \mapsto n_{k+1},
\ldots, n_{q}$ for predicate symbol $P$ with arity $q$ iff for every
$q$-ple $\vec{u}$, $\vec{u}'$ in $D(P)$, whenever $u_i = u'_i$ for all $i \leq k$,
then we also have $u_i = u'_i$ for all $k+1 \leq i \leq q$.
If $k = 1$, we say that it is a {\em key dependency}.
\end{definition}

Clearly, any database instance $D$ satisfies a functional dependency
$n_1, \ldots, n_k \mapsto n_{k+1}, \ldots, n_{q}$ iff it satifies the
following:
\begin{eqnarray*}
 \forall \vec{x} , \vec{y} \left(P(\vec{x}) \land P(\vec{y}) \land
 \bigwedge_{i \leq k }(x_i = y_i) \to \bigwedge_{k+1 \leq i \leq
   q}(x_i = y_i)\right)
\end{eqnarray*}

\begin{definition}[Value Constraint]
A {\em value constraint} is an expression of type $n_k \in D(P_v)$, where $D(P_v)$ contains a list of admissible values.  A
database instance $D$ satisfies a value constraint $n_k \in P_v$ for predicate symbol $P$ with arity $q \geq k$ iff for every $q$-ple $\vec{u}$ in $D(P)$, $u_k \in D(P_v)$.
\end{definition}

Also for value constraints, it is easy to check that an instance $D$
satisfies constraint $n_k \in P_v$ for symbol $P$ iff it satisfies the following:
\begin{eqnarray*}
\forall x_1, \dots, x_q (P(x_1,\dots,x_q) \to P_v(x_k))
\end{eqnarray*}

\begin{definition}[Referential Integrity Constraint]
A referential integrity constraints enforces the foreign key of a predicate $P_1$ to be the primary key of predicate $P_2$.
A database instance satisfy a referential integrity constraint on the last $k$ attributes, and we denote it $(P_1\to P_2, k)$, if for all $q_{1}$-uple $\vec{u}\in D(P_1)$, there exists a $q_{2}$-uple $\vec{u}'\in D(P_2)$ such that for all $1\leq i \leq k$ we have that $u_{q_1-k+j}=u_j'$.
\end{definition}

A referential integrity constraint can also be translated in a first-order formula as follows:
\begin{eqnarray*}
\forall \vec{x} [ P_1(\vec{x}) \rightarrow \exists \vec{y} (P_2(\vec{y}) \wedge \bigwedge_{i=1}^{k} (x_{q_1-k+j}=y_j)) ]
\end{eqnarray*}

%


\section{Aggregators} \label{aggregators}

The main research question we investigate in this paper regards how to
define an aggregated database instance from the instances of
$\N=\{1,\dots,n\}$ agents.  This question is typical in social choice
theory, where judgements, preferences, etc., are aggregated according
to some notion of rationality that will be introduced in
Section~\ref{sec:lifting}.

For the rest of the paper we fix a database schema $\D$ over a common
domain $U$, and consider a {\em profile } $\vec{D} = (D_1, \dots,
D_n)$ of $n$ instances over $\D$ and $U$.  Then, we can define an
aggregation procedure on such instances.
\begin{definition}[Aggregation Procedure]
Given database schema $\D$ and domain $U$, an {\em aggregation
  procedure} $F: \D(U)^n \to \D(U)$ is a function assigning to each
  tuple $\vec{D}$ of instances for $n$ agents an aggregated instance
  $F(\vec{D}) \in \D(U)$.  Let $\F$ be the class of all aggregation
  procedures.
\end{definition}

We use $N_{\vec{u}}^{\vec{D}(P)} :: = \{ i \in \N \mid \vec{u} \in
D_i(P) \}$ to denote the set of agents accepting tuple $\vec{u}$ for
symbol $P$, under profile $\vec{D}$. Notice that considering a unique
domain $U$ is not really a limitation of the proposed approach:
instances $D_1, \ldots, D_n$, each on a possibly different domain
$U_i$, for $i \leq n$, can all be seen as instances on
$\bigcup_{i \in \N} U_i$.

Hereafter we illustrate and discuss some examples of aggregation
procedures:
%

\textbf{Union} (or nomination): for every $P \in \D$, $F(\vec{D})(P)=\bigcup_{i
  \leq n} D_i(P) $. 
Intuitively, every agent is seen as having partial but correct
  information about the state of the world. Union can be considered a
  good aggregator if databases represent the agents' knowledge bases
  (certain information).
\smallskip

\textbf{Intersection} (or unanimity): for every $P \in \D$,
  $F(\vec{D})(P)=\bigcap_{i \leq n} D_i(P) $. 
Here every agent is supposed to have a partial and possibly incorrect
  vision of the state of the world.
\smallskip

\textbf{Quota rules}: a {\em quota} rule is an aggregation rule $F$
  defined via functions $q_P : U^q \to \{0,1, \ldots , n+1 \}$,
  associating each symbol $P$ and $q$-uple with a quota, by
  stipulating that $\vec{u} \in F(\vec{D})(P)$ iff $|\{i \mid \vec{u}\in D_i(P)\}| \geq
  q_P(\vec{u})$.  $F$ is called {\em uniform} whenever $q$ is a
  constant function for all tuples and symbols.  Intuitively, if a
  tuple $\vec{u}$ appears in at least $q(\vec{u})$ of the initial
  databases, then it is accepted.
The (strict) majority rule is a quota rule for $q = \lceil (n+1)/2
\rceil$; while  union and intersection are quota rule for $q = 1$ and $q = n$ respectively. We call the uniform quota rules for $q = 0$ and $q =
n + 1$ {\em trivial rules}.
\smallskip

\textbf{Distance-based function}: 
%
%
The symmetric distance can be used to measure dissimilarity between databases, obtaining the following definition:
\begin{eqnarray*}
F(\vec{D})(P) & = & \argmin_{A \underset{fin}{\subset} U^{q_P}} \sum_{i \in \N} (|D_i(P)\setminus A| + |A \setminus D_i(P)|)
\end{eqnarray*}

Intuitively, the symmetric distance minimizes the ``distance'' between
the aggregated database $F(\vec{D})$ and each $D_i$, defined as the
number of tuples in $D_i$ but not in $F(\vec{D})$, plus the number of
tuples in $F(\vec{D})$ but not in $D_i$, calculated across all
$i \in \N$.


\smallskip

\textbf{Dictatorship of agent $i^* \in \N$}: 
we have that $F(\vec{D}) = D_{i^*}$, i.e., the dictator $i^*$
  completely determines the aggregated database.
\smallskip

{\bf Oligarchy of coalition $C^* \subseteq \N$}: for every $P \in \D$,
  $F(\vec{D})(P) = \bigcap_{i \leq C^*} D_{i}(P) $.  Oligarchy reduces
  to dictatorship for singletons, and to intersection for $C^* = \N$.
\smallskip



 Quota rules are inspired by their homonyms in judgment
aggregation \cite{DietrichListJTP2007}, introduced as a generalisation
of the classic majority rule. The union and the intersection rules are
well-known in the area of modal epistemic logic, corresponding,
respectively, to distributed knowledge and ``everybody knows
that'' \cite{Hintikka1962}. Distance-based procedures have been widely
studied and axiomatised in the area of logic-based belief
merging \cite{KoniecznyPinoPerezJLC2002}, while dictatorships and
oligarchies are classical notions from social choice theory.
Obviously, different aggregation procedures can be thought of. We
chose to focus on those above in the following, as they are
well-studied in the literature and have nice computational properties
such as being computable in polynomial time. 

\section{The Axiomatic Method}\label{sec:axioms}

Aggregation procedures are best characterised by means of axioms. In
particular, we consider the following properties, where relation
symbols $P, P' \in \D$, profiles $\vec{D}, \vec{D}' \in \D(U)^n$,
tuples $\vec{u}$, $\vec{u}' \in U^+$ are all universally quantified.
%

\smallskip

\textbf{Independence ($I$)}: 
if $N_{\vec{u}}^{\vec{D}(P)} = N_{\vec{u}}^{\vec{D}'(P)}$ then
$\vec{u} \in F(\vec{D})(P)$ iff $\vec{u} \in F(\vec{D}')(P)$.

\smallskip

Intuitively, if the same agents accepts (resp.~reject) a tuple in two
different profiles, then the tuple is accepted (resp.~rejected) in
both aggregated instances.  The axiom of independence is a widespread
requirement from social choice theory, and is arguably the main cause
of most impossibility theorems, such as Arrow's seminal
result \cite{Arrow1963}.  From a computational perspective,
independent rules are typically easier to compute than non-independent
ones.  Clearly, quota rules satisfy independence; while neither
dictatorship nor oligarchies do.

\smallskip

\textbf{Unanimity ($U$)}: 
$F(\vec{D})(P) \supseteq \bigcap_{i \in \N} D_i(P)$.

\smallskip

That is, a tuple accepted by all agents, also appears in the
aggregated database (for the relevant relation symbol). In
particular, all rules in Section~\ref{aggregators} satisfy unanimity.
%

\smallskip

\textbf{Groundedness ($G$)}: 
$F(\vec{D})(P) \subseteq \bigcup_{i \in \N}
  D_i(P)$.

\smallskip

By groundedness any tuple appearing in the aggregated database must be
accepted by some agent. All rules from Section~\ref{aggregators}, with
the exception of the distance-based rule, satisfy this property.

\smallskip

\textbf{Anonymity ($A$)}: for every 
permutation $\pi : \N \to \N$, we have $F(D_1, \ldots , D_n ) = F
(D_{\pi(1)} , \ldots , D_{\pi(n)})$.

\smallskip

Here the identity of agents is irrelevant for the aggregation
procedure. Clearly, this is the case for all aggregators in
Section~\ref{aggregators} but dictatorship and oligarchy.

\smallskip

\textbf{Positive Neutrality ($N^+$)}: 
if $N_{\vec{u}}^{\vec{D}(P)} = N_{\vec{u}'}^{\vec{D}(P)}$ then
$\vec{u} \in F(\vec{D})(P)$ iff $\vec{u}' \in F(\vec{D})(P)$.

\textbf{Negative Neutrality ($N^-$)}: 
if $N_{\vec{u}}^{\vec{D}(P)} = \N \setminus N_{\vec{u}'}^{\vec{D}(P)}$ then
$\vec{u} \in F(\vec{D})(P)$ iff $\vec{u}' \not\in F(\vec{D})(P)$.

Observe that both versions of neutrality differs from independence as here we consider
two different tuples in the same profile, while independence deals
with the same tuple in two different profiles.
We can easily see that all aggregators introduced in Section~\ref{aggregators} satisfy positive neutrality and, with the exception of most quota rules (see below), negative neutrality as well.
%

\smallskip

\textbf{Systematicity ($S$)}: 
if $N_{\vec{u}}^{\vec{D}(P)} = N_{\vec{u}'}^{\vec{D}(P')}$
then
  $\vec{u} \in F(\vec{D})(P)$ iff $\vec{u}' \in F(\vec{D})(P')$.

\smallskip

Observe that systematicity is equivalent to the conjunction of neutrality and independence.



\smallskip

\textbf{Permutation-Neutrality ($N^{P}$)}: Given a permutation $\rho
  : U \to U$ over domain $U$, and its straightforward lifting 
  to a profile $\vec{D}$, then
  $F(\rho(\vec{D}))=\rho(F(\vec{D}))$.

\smallskip

Again, all aggregators but dictatorship and oligarchies satisfy
permutation-neutrality.


 


\textbf{Monotonicity ($M$)}: 
if $\vec u \in F(\vec{D})(P)$ and for every $i \in \N$, either
$D_i(P) = D'_i(P)$ or $D_i(P) \cup \{\vec{u}\} \subseteq D'_i(P)$, then $\vec
u \in F(D')(P)$.





\smallskip

Intuitively, a monotonic aggregators keeps on accepting a given tuple
if the support for that tuple increases.

\smallskip

Combinations of the axioms above can be used to characterise some of
the rules that we defined in Section~\ref{aggregators}. Some of these
results, such as the following, lift to databases known results in
judgement (propositional) aggregation.
\begin{lemma}
An aggregation procedure satisfies $A$, $I$, and $M$ 
iff it is a quota rule.
\end{lemma}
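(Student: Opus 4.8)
The plan is to prove the two implications separately: that every quota rule satisfies the three axioms, which is a direct check, and that the three axioms force a quota rule, which is the threshold argument familiar from judgment aggregation, adapted to the relational setting.

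For the easy direction, fix a quota rule $F$ given by functions $q_P$. Membership ``$\vec u \in F(\vec D)(P)$'' depends on the profile only through the count $|N_{\vec u}^{\vec D(P)}|$, compared against the fixed threshold $q_P(\vec u)$: this count is invariant under permutations of the agents, yielding $A$; it is determined by the coalition $N_{\vec u}^{\vec D(P)}$ itself, yielding $I$; and it can only weakly increase when each $D_i(P)$ is enlarged by $\vec u$, the threshold being held fixed, yielding $M$. Each of the three is a one-line verification.

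For the converse, assume $F$ satisfies $A$, $I$, $M$, and fix $P \in \D$ and a tuple $\vec u \in U^{q}$. First I would record that every coalition is realizable: for each $S \subseteq \N$ there is a profile $\vec D^S$ with $N_{\vec u}^{\vec D^S(P)} = S$, obtained e.g.\ by putting $\vec u$ into $D_i(P)$ exactly for $i \in S$ and leaving all other interpretations empty. By $I$, whether $\vec u \in F(\vec D)(P)$ depends only on $N_{\vec u}^{\vec D(P)}$, so there is a well-defined map $w_{P,\vec u} : 2^{\N} \to \{0,1\}$ with $w_{P,\vec u}(S) = 1$ exactly when $\vec u \in F(\vec D^S)(P)$, and with $\vec u \in F(\vec D)(P)$ iff $w_{P,\vec u}(N_{\vec u}^{\vec D(P)}) = 1$ for every profile $\vec D$. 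Next I would use $A$: writing $\pi\vec D = (D_{\pi(1)},\dots,D_{\pi(n)})$ for a permutation $\pi$ of $\N$, one computes $N_{\vec u}^{(\pi\vec D)(P)} = \pi^{-1}(N_{\vec u}^{\vec D(P)})$, and since $F(\vec D) = F(\pi\vec D)$ this forces $w_{P,\vec u}(S) = w_{P,\vec u}(\pi^{-1}(S))$ for all $S$ and $\pi$. As the symmetric group acts transitively on subsets of a given size, $w_{P,\vec u}(S)$ depends only on $|S|$, say $w_{P,\vec u}(S) = g_{P,\vec u}(|S|)$ for some $g_{P,\vec u} : \{0,\dots,n\} \to \{0,1\}$.

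Finally I would invoke $M$ to show $g_{P,\vec u}$ is non-decreasing. Given $k \leq k' \leq n$ with $g_{P,\vec u}(k) = 1$, take $\vec D$ with $N_{\vec u}^{\vec D(P)} = \{1,\dots,k\}$ and let $\vec D'$ agree with $\vec D$ except that $D'_i(P) = D_i(P) \cup \{\vec u\}$ for $k < i \leq k'$; the hypothesis of $M$ then holds for $\vec u$ and $P$, so $\vec u \in F(\vec D')(P)$, i.e.\ $g_{P,\vec u}(k') = 1$. Hence $g_{P,\vec u}$ is a threshold function, and putting $q_P(\vec u) := \min\{k : g_{P,\vec u}(k) = 1\}$, with the convention $q_P(\vec u) = n+1$ when $g_{P,\vec u} \equiv 0$, exhibits $F$ as precisely the quota rule with quota functions $q_P$. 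The argument is mostly bookkeeping; the only points that need a little care are the realizability of arbitrary coalitions — which is what lets us apply $I$, $A$ and $M$ to purpose-built profiles and hence define and manipulate $w_{P,\vec u}$ — and the boundary quotas $q_P(\vec u) \in \{0, n+1\}$, corresponding to $g_{P,\vec u}$ being constantly $1$ or constantly $0$.
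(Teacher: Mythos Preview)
Your proof is correct and follows essentially the same approach as the paper: independence reduces acceptance to a function of the supporting coalition, anonymity further reduces it to a function of the coalition's cardinality, and monotonicity makes that function a threshold. Your version is more careful than the paper's---you explicitly address realizability of arbitrary coalitions and the boundary quotas $0$ and $n+1$---but the underlying argument is the same.
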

\begin{proof}
The implication from right to left follows from the fact that quota
rules satisfy independence $I$, anonymity $A$, and monotonicity $M$,
as we remarked above.

For the implication from left to right, observe that, to accept a
given tuple $\vec{u}$ in $F(\vec{D})(P)$, an independent aggregation
procedure will only look at the set of agents $i \in \N$ such that
$\vec{u} \in D_i(P)$. If the procedure is also anonymous, then
acceptance is based only on the number of individuals admitting the
tuple. Finally, by monotonicity, there will be some minimal number of
agents required to trigger collective acceptance. That number is the
quota associated with the tuple and the symbol in hand.
\end{proof}

If we add neutrality (both positive and negative), then we obtain the
class of uniform quota rules. If we furthermore impose unanimity and
groundedness, then this excludes the trivial quota rules.
\begin{lemma}
If the number of individuals is odd and $|\D| \geq 2$, an aggregation
procedure $F$ satisfies $A$, $N^-$, $N^+$, $I$ and
$M$ on the full domain $\D(U)^n$ if and only if it is the majority
rule.
\end{lemma}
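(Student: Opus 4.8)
The plan is to read the statement off the previous lemma (``$A$, $I$, $M$ iff quota rule'') and the two neutrality axioms, treating the two implications separately. For the right-to-left implication I would verify directly that the strict majority rule satisfies the five axioms. Since $n$ is odd, this rule is exactly the uniform quota rule with quota $q=\lceil(n+1)/2\rceil=(n+1)/2$, hence it is a quota rule and therefore anonymous, independent, monotone, and positively neutral (as already observed above). For negative neutrality, if $N_{\vec u}^{\vec D(P)}$ has $k$ elements and $N_{\vec u'}^{\vec D(P)}=\N\setminus N_{\vec u}^{\vec D(P)}$ has $n-k$ elements, then $\vec u\in F(\vec D)(P)$ iff $k\geq(n+1)/2$, whereas $\vec u'\in F(\vec D)(P)$ iff $n-k\geq(n+1)/2$, i.e.\ iff $k\leq(n-1)/2$; since $n$ is odd these two conditions on the integer $k\in\{0,\dots,n\}$ are exact complements, so $\vec u\in F(\vec D)(P)$ iff $\vec u'\notin F(\vec D)(P)$. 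Oddness enters only here, and it is essential: for even $n$ the value $k=n/2$ would satisfy neither condition.

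For the left-to-right implication I would argue in three steps. First, by the previous lemma, $A$, $I$ and $M$ already force $F$ to be a quota rule, given by functions $q_P:U^{q_P}\to\{0,\dots,n+1\}$. Second, I would use $N^+$ to show each $q_P$ is constant over tuples: given distinct tuples $\vec u,\vec u'$ over $P$ and any $m\in\{0,\dots,n\}$, on the full domain $\D(U)^n$ there is a profile in which $\vec u$ and $\vec u'$ have the same support of size $m$ for $P$ (put both tuples in $D_i(P)$ for $i$ in a chosen $m$-element set, and neither otherwise); then $N^+$ gives $m\geq q_P(\vec u)\Leftrightarrow m\geq q_P(\vec u')$, and letting $m$ vary forces $q_P(\vec u)=q_P(\vec u')$. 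Third, I would pin down this common value $q_P$ with $N^-$: for a chosen $S\subseteq\N$ take a profile in which the support of $\vec u$ for $P$ is $S$ and that of $\vec u'$ is $\N\setminus S$; then $N^-$ yields, for every $|S|\in\{0,\dots,n\}$, that $|S|\geq q_P\Leftrightarrow|S|\geq n-q_P+1$, and two thresholds in $\{0,\dots,n+1\}$ that agree on all of $\{0,\dots,n\}$ must coincide, so $q_P=n-q_P+1$, that is $q_P=(n+1)/2$. As $n$ is odd this is an integer equal to $\lceil(n+1)/2\rceil$, and since the argument applies to every $P\in\D$, the rule $F$ is the (strict) majority rule. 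The assumptions $|\D|\geq2$ together with the infinitude of $U$ are what guarantee that enough distinct tuples (and predicates) exist for all these profiles to be constructed.

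I expect the only delicate point to be the bookkeeping in the two profile constructions: one has to check that fixing the support of $\vec u$ (and of $\vec u'$) for $P$ leaves everything else free to be chosen arbitrarily, so that the constructed profiles genuinely lie in $\D(U)^n$ and the hypotheses of $N^+$ and $N^-$ are met verbatim. Once the profiles are in place, the remaining steps are short manipulations with the quota $(n+1)/2$ and the two threshold comparisons above.
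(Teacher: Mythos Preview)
Your proposal is correct and follows essentially the same route as the paper's (much terser) proof: invoke the previous lemma to reduce to quota rules, use $N^{+}$ to make the quota uniform, and use $N^{-}$ together with oddness of $n$ to force the common quota to be $(n+1)/2$. One small remark: your own argument only ever compares two tuples for the \emph{same} predicate $P$, so it actually relies on the infinitude of $U$ (to get two distinct $q_P$-tuples) rather than on $|\D|\geq 2$; the latter hypothesis is not doing any work in your construction.
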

\begin{proof}
By neutrality the quota must be the same for all tuples and all
relation symbols. By negative-neutrality the two sets
$N_{\vec{u}}^{\vec{D}(P)}$ and $\N \setminus N_{\vec{u}}^{\vec{D}(P)}$
must be treated symmetrically.  Hence, the only possibility is to have
a uniform quota of $(n +1)/2$.
\end{proof}

The corresponding versions of these results have been shown in
judgment and graph
aggregation \cite{DietrichListJTP2007,EndrissGrandiAIJ2017}.  Notice
however that there are some notable differences w.r.t.~the
literature. For instance, the axiom of neutrality is here split into a
positive and a negative part.

We conclude this section by showing the following equivalence between majority and distance-based rules.
\begin{lemma}
In the absence of integrity constaints, and for an odd number of
agents, the distance-based rule coincides with the majority rule.
\end{lemma}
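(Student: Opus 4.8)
The plan is to exploit the fact that the symmetric-distance objective decomposes additively over tuples. In the absence of integrity constraints the $\argmin$ ranges over \emph{all} finite $A \subseteq U^{q_P}$, and for any such $A$ we can rewrite
$$\sum_{i \in \N} \bigl(|D_i(P)\setminus A| + |A\setminus D_i(P)|\bigr) \;=\; \sum_{\vec{u} \in A} \bigl(n - |N_{\vec{u}}^{\vec{D}(P)}|\bigr) \;+\; \sum_{\vec{u} \notin A} |N_{\vec{u}}^{\vec{D}(P)}|,$$
because a tuple $\vec{u}$ placed in $A$ is counted once for each agent $i$ with $\vec{u} \notin D_i(P)$ (contributing to $|A \setminus D_i(P)|$), while a tuple $\vec{u}$ left out of $A$ is counted once for each agent with $\vec{u} \in D_i(P)$ (contributing to $|D_i(P)\setminus A|$). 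Hence the total cost is a sum of independent per-tuple contributions, each depending on $A$ only through whether $\vec{u}\in A$.

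The steps would then be as follows. First I would observe that the $\argmin$ is attained and may be sought among subsets of the finite set $\bigcup_{i\in\N} D_i(P)$: a tuple $\vec{u}$ absent from every $D_i(P)$ has $N_{\vec{u}}^{\vec{D}(P)}=\emptyset$, so it costs $n>0$ if included and $0$ if excluded, and therefore never lies in an optimal $A$. Second, since the per-tuple contributions are independent, the sum is minimised by choosing, separately for each $\vec{u}$, whichever alternative is cheaper: put $\vec{u}$ into $A$ exactly when $n - |N_{\vec{u}}^{\vec{D}(P)}| < |N_{\vec{u}}^{\vec{D}(P)}|$, i.e.\ when $|N_{\vec{u}}^{\vec{D}(P)}| > n/2$. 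Third, the oddness of $n$ enters here: $|N_{\vec{u}}^{\vec{D}(P)}|$ is never exactly $n/2$, so for every tuple one alternative is \emph{strictly} cheaper, the minimiser is unique, and it equals $\{\vec{u} \mid |N_{\vec{u}}^{\vec{D}(P)}| \geq (n+1)/2\}$, which is precisely the output of the majority rule on $P$. Running this for every $P \in \D$ yields $F(\vec{D}) = \mathrm{Maj}(\vec{D})$.

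The only genuine obstacle is making the two ingredients just highlighted watertight: reducing the a priori infinite search space $\{A \subseteq U^{q_P}\}$ to a finite one so that the $\argmin$ is a well-defined instance, and invoking parity to guarantee uniqueness of the minimiser. Once the additive, tuple-by-tuple decomposition is established, the remainder is a single comparison of $k$ against $n-k$. (For an even number of agents the same computation shows the $\argmin$ is in general not a singleton — tuples supported by exactly $n/2$ agents may be freely included or excluded — which is exactly why the hypothesis on $n$ cannot be dropped.)
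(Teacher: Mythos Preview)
Your proposal is correct and follows essentially the same approach as the paper: both proofs rewrite the symmetric-distance objective as a sum of independent per-tuple contributions (the paper via characteristic functions and a swap of summation order, you via the $\sum_{\vec u\in A}/\sum_{\vec u\notin A}$ split) and then minimise each term separately. Your version is in fact more careful than the paper's, since you explicitly justify restricting the search to finite $A\subseteq\bigcup_i D_i(P)$ and explain how oddness of $n$ forces uniqueness of the minimiser.
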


\begin{proof}
By the definition of the distance based rule, we have that 
\begin{eqnarray*}
F(\vec{D})(P) & = & \argmin_{A \underset{fin}{\subset} U^{q_P}} \sum_{i \in \N} (|D_i(P)\setminus A| + |A \setminus D_i(P)|)
\end{eqnarray*}

With a slight abuse of notation, if $A \subseteq U^{m}$ let
$A(\vec{u})$ be its characteristic function.  Since the minimisation
is not constrained, and all structures are finite, this is equivalent
to:
\begin{eqnarray*}
F(\vec{D})(P) & = & \argmin_{A \underset{fin}{\subset} U^{q_P}} \sum_{i \in \N}  \sum_{\vec{u}\in U^{q_P}} |D_i(P)(\vec{u}) - A(\vec{u})| \\
& = & \argmin_{A \underset{fin}{\subset} U^{q_P}}  \sum_{\vec{u}\in U^{q_P}} \sum_{i \in \N}  |D_i(P)(\vec{u}) - A(\vec{u})|
\end{eqnarray*}
Therefore, for each $\vec{u}$, if for a majority of the individuals in $\N$ we have that $\vec{u}\in D_i(P)$, then $\vec{u}\in A$ minimises the overall distance, and symmetrically for the case in which a majority of individuals are such that $\vec{u}\not \in D_i(P)$. \end{proof}

\section{Lifting Constraints}\label{sec:lifting}


In this section we analyse further the properties of the aggregation
procedures introduced in Section~\ref{aggregators}. Specifically, we
present a notion of {\em collective rationality} that aims to capture
the appropriateness of a given aggregator $F$ w.r.t.~some constraint
$\phi$ on the input instances $D_1, \ldots, D_n$.
Hereafter let $\phi$ be a sentence in the first-order language
$\L_{\D}$ associated to $\D$, interpreted as a common constraint that
is satisfied by all $D_1, \ldots, D_n$. Here we are interested in the
following notion:
\begin{definition}[Collective Rationality]
A constraint $\phi$ is lifted by an aggregation procedure $F$ if
whenever $D_i \models \phi$ for all $i \in \N$, then also
$F(\vec{D}) \models \phi$. 

 An aggregation procedure $F :
\D(U)^n \to \D(U)$ is {\em collectively rational} (CR) with respect to
$\phi$ iff $F$ lifts $\phi$.
\end{definition}

Intuitively, an aggregator is CR w.r.t.~constraint $\phi$ iff it
lifts, or preserves, $\phi$.
\begin{example}
 We now provide an illustrative example of first-order collective
(ir)rationality with the majority rule.  Consider agents 1 and 2 with
database schema $\D = \{ P/1, Q/2 \}$.  Two database instances are
given as $D_1 = \{ D(a), Q(a,b) \}$ and $D_2 = \{ D(a),
Q(a,c) \}$. Clearly, both instances satisfy the integrity constraint
$\phi = \forall x (P(x) \to \exists y Q(x,y))$.  However, their
aggregate $D = F(D_1,D_2) = \{ D(a) \}$, obtained by the majority
rule, does not satisfy $\phi$.
This example, which can be considered a {\em paradox} in the sense
of \cite{GrandiEndrissAIJ2013}, shows that not every constraint in the
language $\L_{\D}$ is collective rational w.r.t.~mojority, thus
obtaining a first, simple negative result.
\end{example}

One natural question to ask about lifting of constraints is the following.
\begin{question}
Given an axiom AX, what is the class of constraints that are lifted by
all aggregators $F$ satisfying AX?
\end{question}

To make this question more precise, consider the following definition.
\begin{definition}
Given a language $\L \subseteq \L_{\D}$, define $CR[\L]$ as the class
of aggregation procedures that lift all $\phi \in \L$:
\begin{eqnarray*}
CR[\L] & ::= & \{F : D(U)^n \to D \mid \text{$F$ is CR for all
$\phi \in \L$}\}
\end{eqnarray*}

Moreover, an aggregator $F$ {\em satisfies a set $AX$ of axioms
w.r.t.~language $\L$}, if $F$ satisfies the axioms in AX on set $\{
D \in \D(U)
\mid D \models \phi \}$ for all constraints $\phi \in \L$. 
The class of all such aggregators is given as:
\begin{eqnarray*}
\F_{\L}[AX] & := & \{F : \D(U)^{n} \to \D(U) \mid \text{
$F$ satisfies $AX$ on} \\ 
& & \text{$\{ D \in \D(U) \mid D \models \phi \}$ for all $\phi \in \L$}\}
\end{eqnarray*}
\end{definition}

The following Lemmas extend results in \cite{GrandiEndrissAIJ2013} to
the case of database aggregation.  Hereafter, for a language $\L$ and
operator $\bullet$, $\L^{\bullet}$ is the language obtained by closing
formulas in $\L$ under $\bullet$.  The proofs are immediate, so we
omit them. We only remark that point (3) follows from the fact that
the constraints $\phi \in \L$ are assumed to be sentences.
\begin{lemma} \label{aux1}
 For every language $\L \subseteq \L_{\D}$:
\begin{enumerate}
\item $CR[\L^{\land}] = CR[\L^{\equiv}] = CR[\L]$


\item $CR[\L \cup \{ \top \}] = CR[\L \cup \{ \bot \}] = CR[\L]$

\noindent
Moreover, 

\item $CR[\L^{\forall}] = CR[\L^{\exists}] = CR[\L]$
\end{enumerate}
\end{lemma}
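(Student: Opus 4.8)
The plan is to prove all three identities with the same two moves. First, the map $\L \mapsto CR[\L]$ is inclusion-reversing: if $\L \subseteq \L'$ then $CR[\L'] \subseteq CR[\L]$, since an aggregator that lifts every constraint in $\L'$ in particular lifts every constraint in $\L$. As each of $\L^{\land}$, $\L^{\equiv}$, $\L \cup \{\top\}$, $\L \cup \{\bot\}$, $\L^{\forall}$, $\L^{\exists}$ contains $\L$, the inclusions ``$\subseteq$'' in all three statements are immediate. Hence the whole content is the reverse inclusion: given an $F$ that lifts every $\phi \in \L$, show that $F$ lifts each newly added sentence.

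For part~(1): if $\psi$ is logically equivalent to some $\phi \in \L$, then $D \models \psi$ iff $D \models \phi$ for every instance $D$, so ``$F$ lifts $\psi$'' and ``$F$ lifts $\phi$'' are the same statement, whence $CR[\L^{\equiv}] = CR[\L]$. For the conjunctive closure, a sentence of $\L^{\land}$ has the form $\phi_1 \land \dots \land \phi_m$ with every $\phi_j \in \L$; if $D_i \models \phi_1 \land \dots \land \phi_m$ for all $i \in \N$, then $D_i \models \phi_j$ for all $i$ and every $j$, so collective rationality with respect to each $\phi_j$ gives $F(\vec{D}) \models \phi_j$, hence $F(\vec{D}) \models \phi_1 \land \dots \land \phi_m$. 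Thus $CR[\L^{\land}] = CR[\L]$.

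For part~(2): every aggregator lifts $\top$, because $F(\vec{D}) \models \top$ holds unconditionally; and every aggregator lifts $\bot$, because the hypothesis ``$D_i \models \bot$ for all $i$'' is never met, so the implication is vacuous. Therefore adding $\top$ or $\bot$ to $\L$ does not remove any aggregator from $CR[\L]$, and both equalities follow.

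For part~(3): since every $\phi \in \L$ is a sentence, the variable bound by a leading $\forall$ or $\exists$ does not occur free in $\phi$, so $\forall x\, \phi$ and $\exists x\, \phi$ are $\phi$ decorated with a vacuous quantifier; iterating, every sentence of $\L^{\forall}$ (resp.\ $\L^{\exists}$) is logically equivalent to the underlying member of $\L$, and part~(1) closes the argument. The step I expect to need care is exactly this ``vacuous quantifier'' claim under the active-domain semantics of Def.~\ref{def:fo-sem}: the clause for $\forall$ makes $\forall x\, \phi$ hold vacuously whenever $\adom(D) = \emptyset$, and dually $\exists x\, \phi$ fail vacuously, so $\forall x\, \phi$ and $\phi$ agree only over instances with non-empty active domain. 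I would dispatch this by invoking the standard database convention that active domains are non-empty, under which the equivalence holds unconditionally; without that convention the reduction, and hence part~(3), would have to be qualified accordingly.
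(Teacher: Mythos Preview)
Your argument matches the paper's own: the proof there is omitted as ``immediate,'' with the sole remark that part~(3) holds because the constraints in $\L$ are sentences---exactly your vacuous-quantifier reduction. Parts~(1) and~(2) are handled identically.

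Your caveat about empty active domains is well taken and in fact sharper than the paper. Under Def.~\ref{def:fo-sem} the equivalence $\forall x\,\phi \equiv \phi$ (for $\phi$ closed) genuinely fails when $\adom(D)=\emptyset$, and the paper does allow empty instances (see the proof of Lemma~\ref{lemma6}, which explicitly considers the case ``$F(\vec{D})$ is empty''). One can manufacture counterexamples in both directions: with $\D=\{P/1,Q/1\}$ and $\phi=\exists x\,P(x)$, an aggregator that returns $P=\emptyset$, $Q=\{a\}$ whenever some $D_i$ is empty (and something with $P\neq\emptyset$ otherwise) lifts $\phi$ but not $\forall y\,\phi$; dually, the constant empty aggregator lifts $\forall x\,P(x)$ but not $\exists y\,\forall x\,P(x)$. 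So as stated, part~(3) requires the non-emptiness convention you invoke; the paper silently relies on it without making it explicit. Your write-up is therefore more careful than the source on this point.
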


%

By Lemma~\ref{aux1} an aggregator $F$ is CR w.r.t.~a language $\L$ iff
it is CR w.r.t.~the closure of $\L$ under either conjuction, or
coimplication, or universal or existential quantification. Also,
adding either $\top$ or $\bot$ does not change collective rationality.

Furthermore, the following result extends Lemma 7
in \cite{GrandiEndrissAIJ2013}. Also in this case, proofs are
immediate and therefore omitted.
\begin{lemma} \label{aux2}
For all languages $\L_1, \L_2 \subseteq \L_{\D}$,
\begin{enumerate}
\item If $\L_1 \subseteq \L_2$ then $CR[\L_2] \subseteq CR[\L_1]$
\item $CR[\L_1 \cup \L_2] = CR[\L_1] \cap CR[\L_2]$
\end{enumerate}
\end{lemma}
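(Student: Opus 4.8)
The plan is to prove the two set-theoretic identities in Lemma~\ref{aux2} directly from the definition of $CR[\L]$ as the class of aggregators that lift every constraint in the given language. Both claims are purely about which aggregators lie in which class, so no properties of the specific aggregators or the logic are needed beyond the definition.

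For part (1), suppose $\L_1 \subseteq \L_2$ and let $F \in CR[\L_2]$. By definition $F$ lifts every $\phi \in \L_2$; in particular $F$ lifts every $\phi \in \L_1$, since $\L_1$ is a subset. Hence $F \in CR[\L_1]$, giving $CR[\L_2] \subseteq CR[\L_1]$. This is a one-line monotonicity (antitonicity) argument.

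For part (2), I would argue by mutual inclusion. For the inclusion $CR[\L_1 \cup \L_2] \subseteq CR[\L_1] \cap CR[\L_2]$, apply part (1) twice: since $\L_1 \subseteq \L_1 \cup \L_2$ and $\L_2 \subseteq \L_1 \cup \L_2$, we get $CR[\L_1 \cup \L_2] \subseteq CR[\L_1]$ and $CR[\L_1 \cup \L_2] \subseteq CR[\L_2]$, hence their intersection. For the reverse inclusion, take $F \in CR[\L_1] \cap CR[\L_2]$ and let $\phi \in \L_1 \cup \L_2$; then $\phi \in \L_1$ or $\phi \in \L_2$, and in either case the corresponding membership of $F$ shows $F$ lifts $\phi$. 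Thus $F \in CR[\L_1 \cup \L_2]$.

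There is essentially no obstacle here: the statement is a routine unwinding of definitions, which is presumably why the authors describe the proofs as immediate. The only thing to be careful about is making the quantifier structure explicit — ``$F$ lifts every formula in the union'' genuinely does reduce to ``$F$ lifts every formula in $\L_1$ and $F$ lifts every formula in $\L_2$'' because the union of the two formula-sets is exactly the disjoint-of-cases over membership. I would present it at the level of detail above, perhaps compressed to two sentences per item, and note explicitly that part (2) uses part (1) for one of its two inclusions.
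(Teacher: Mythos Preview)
Your proposal is correct and is precisely the routine unwinding of the definition of $CR[\L]$ that the paper has in mind; indeed the paper omits the proof entirely as ``immediate,'' and your two-line arguments for each item are exactly what that omission is standing in for.
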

%
%

By Lemma~\ref{aux2}, collective rationality is anti-monotone
w.r.t.~language inclusion, and an aggregator $F$ is CR w.r.t.~the
union of languages iff it is CR w.r.t.~each language separately.

The next results, which extend Lemma 8 in \cite{GrandiEndrissAIJ2013}, relate collective rationality with axioms.
\begin{lemma} \label{lemma8}
For all languages $\L_1, \L_2 \subseteq \L_{\D}$,
\begin{enumerate}
\item If $\L_1 \subseteq \L_2$ then $\F_{\L_2}[AX] \subseteq \F_{\L_1}[AX]$\\
In particular, if $\top \in \L$ then $\F_{\L}[AX] \subseteq \F_{\{ \top \}}[AX]$
\item $\F_{\L}[AX1, AX2] = \F_{\L}[AX1] \cap \F_{\L}[AX2]$
\end{enumerate}
\end{lemma}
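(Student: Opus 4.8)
The plan is to prove both parts of Lemma~\ref{lemma8} directly from the definition of $\F_{\L}[AX]$, exploiting the set-theoretic structure of the quantifiers involved. Recall that $F \in \F_{\L}[AX]$ means $F$ satisfies every axiom in $AX$ on the restricted domain $\{D \in \D(U) \mid D \models \phi\}$ for \emph{every} $\phi \in \L$.

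For part (1), suppose $\L_1 \subseteq \L_2$ and let $F \in \F_{\L_2}[AX]$. I need to show $F \in \F_{\L_1}[AX]$, i.e., $F$ satisfies $AX$ on $\{D \mid D \models \phi\}$ for every $\phi \in \L_1$. But every $\phi \in \L_1$ is also in $\L_2$, so this is immediate from $F \in \F_{\L_2}[AX]$. The ``in particular'' clause follows by instantiating $\L_1 = \{\top\}$ (noting $\top \in \L$ by hypothesis), using the fact that $\{D \mid D \models \top\} = \D(U)$.

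For part (2), I argue by double inclusion. If $F \in \F_{\L}[AX1, AX2]$, then $F$ satisfies both $AX1$ and $AX2$ on $\{D \mid D \models \phi\}$ for every $\phi \in \L$; in particular it satisfies $AX1$ there (so $F \in \F_{\L}[AX1]$) and it satisfies $AX2$ there (so $F \in \F_{\L}[AX2]$), giving $F \in \F_{\L}[AX1] \cap \F_{\L}[AX2]$. Conversely, if $F$ lies in both $\F_{\L}[AX1]$ and $\F_{\L}[AX2]$, then for each $\phi \in \L$ it satisfies $AX1$ on $\{D \mid D \models \phi\}$ and also satisfies $AX2$ on the same set, hence it satisfies $\{AX1, AX2\}$ there; since $\phi$ was arbitrary, $F \in \F_{\L}[AX1, AX2]$.

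Honestly, there is no real obstacle here: both statements are essentially unfolding of definitions, exactly parallel to Lemma~\ref{aux2} and Lemma~\ref{aux1}, and the paper itself signals that ``proofs are immediate.'' The only thing to be slightly careful about is the order of the two universal quantifiers (``for every axiom'' and ``for every $\phi$'') in the definition of $\F_{\L}[AX]$ — part (2) works precisely because these commute, so that requiring both axioms for all $\phi$ is the same as requiring each axiom for all $\phi$ separately. If one wanted a cleaner presentation, I would simply note $\F_{\L}[AX] = \bigcap_{\phi \in \L} \{F \mid F \text{ satisfies } AX \text{ on } \{D \mid D \models \phi\}\}$ and observe that satisfying a set of axioms on a fixed domain is the intersection over the individual axioms; both claims then drop out from elementary properties of intersection.
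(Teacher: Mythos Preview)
Your proposal is correct and follows essentially the same approach as the paper's own proof: both argue part~(1) by noting that the defining condition of $\F_{\L_2}[AX]$ quantifies over a superset of the constraints in $\L_1$, derive the ``in particular'' clause by instantiating $\L_1=\{\top\}$, and obtain part~(2) by observing that satisfying a set of axioms on each restricted domain is equivalent to satisfying each axiom separately. Your additional remark about the commutation of the two universal quantifiers and the intersection reformulation is sound but not needed beyond what the paper does.
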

\begin{proof}
As regards (1), if $F$ satisfies $AX$ on $\{ D \in \D(U) \mid D
\models \phi \}$, for all $\phi  \in \L_2$, and $\L_1 \subseteq \L_2$, then
in particular it satisfies $AX$ on $\{ D \in \D(U) \mid D \models
\phi \}$, for all $\phi  \in \L_1$.
Then, (2) follows immediately from (1), as $\{ \top \} \subseteq \L$.
As for (3), $F$ satisfies $AX1$ and $AX2$ on $\{ D \in \D(U) \mid D
\models \phi \}$, for all $\phi  \in \L$, iff both $F$ satisfies $AX1$ and
$F$ satisfies $AX2$.
\end{proof}

However, not all results available at the propositional level extend
to the first order. In particular, the following result means that
Lemma 6 in \cite{GrandiEndrissAIJ2013} does not lift to the first
order.
\begin{lemma} \label{lemma6}
There exists languages 
$\L_1$ and $\L_2$, both containing $\top$ and $\bot$, such that
$\L_1 \neq \L_2$ but $CR[\L_1 ] = CR[\L_2]$.
\end{lemma}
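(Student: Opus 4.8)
The plan is to exhibit two syntactically different fragments of $\L_\D$ that nevertheless impose exactly the same collective-rationality demands on aggregators, thereby showing that the propositional correspondence between ``distinct languages'' and ``distinct CR classes'' fails in the first-order setting. The cleanest way is to pick two single sentences $\phi_1$ and $\phi_2$ that are logically equivalent over all finite $\D$-instances under the active-domain semantics, put $\L_1 = \{\top,\bot,\phi_1\}$ and $\L_2 = \{\top,\bot,\phi_2\}$, and observe that $CR[\L_i]$ depends only on the class of instances satisfying each sentence (by Lemma~\ref{aux2}(2), $CR[\L_i] = CR[\{\top\}] \cap CR[\{\bot\}] \cap CR[\{\phi_i\}]$, and $CR[\{\top\}] = CR[\{\bot\}] = \F$). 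Since $\phi_1$ and $\phi_2$ define the same class of instances, $CR[\{\phi_1\}] = CR[\{\phi_2\}]$, hence $CR[\L_1] = CR[\L_2]$; while a syntactic choice of $\phi_1 \neq \phi_2$ gives $\L_1 \neq \L_2$.

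First I would make precise what ``language'' means here: the objects in $\L_\D$ are formulas (syntactic strings), so $\{\top,\bot,\phi_1\} \neq \{\top,\bot,\phi_2\}$ as soon as $\phi_1$ and $\phi_2$ are distinct strings, even if semantically equivalent. Then I would produce a concrete equivalent pair. A convenient choice uses the active-domain semantics directly: take any relation symbol $P/1 \in \D$ and let $\phi_1 := \forall x\, (P(x) \to P(x))$ and $\phi_2 := \forall x\, \forall y\, (x = y \to y = x)$. Both are valid in every instance, so each defines the class of all instances; alternatively, to stay closer in spirit to genuine constraints one can take $\phi_1 := \forall x\, (P(x) \to P(x))$ and $\phi_2 := \forall x\,(x = x)$, again both always true. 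Either way $CR[\{\phi_1\}] = CR[\{\phi_2\}] = \F$, the full class, and we are done. If one prefers a nontrivial common class rather than the trivial one, pick a satisfiable, non-valid sentence $\psi$ and set $\phi_1 := \psi$, $\phi_2 := \psi \wedge \forall x\,(x = x)$: these are logically equivalent, define the same instance class, are distinct strings, and yield $CR[\{\phi_1\}] = CR[\{\phi_2\}] = CR[\{\psi\}]$.

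The structure of the write-up is then: (i) fix the equivalent pair $\phi_1,\phi_2$ and verify $D \models \phi_1 \iff D \models \phi_2$ for every $\D$-instance $D$ --- immediate from Def.~\ref{def:fo-sem}; (ii) invoke Lemma~\ref{aux2}(2) to reduce $CR[\L_i]$ to $CR[\{\phi_i\}]$ (absorbing $\top,\bot$ via the fact, already noted after Lemma~\ref{aux1}, that they contribute nothing); (iii) note that by definition $CR[\{\phi\}]$ depends on $\phi$ only through the set $\{D \mid D \models \phi\}$, so the equivalence of step (i) gives $CR[\{\phi_1\}] = CR[\{\phi_2\}]$; (iv) conclude $CR[\L_1] = CR[\L_2]$ while $\L_1 \neq \L_2$ as sets of formulas. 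The only real subtlety --- and the part I would be careful about --- is not a mathematical obstacle but a conceptual one: I must make explicit that ``$\L_1 \neq \L_2$'' is meant at the syntactic level, because the whole point is that $CR[-]$ is a semantic invariant and therefore cannot separate syntactically-distinct-but-equivalent languages; contrasting this with the propositional case (Lemma~6 of \cite{GrandiEndrissAIJ2013}, which presumably works with a more rigid notion of agenda/issue so that syntactic and semantic distinctness coincide) is what makes the result informative rather than a triviality, so I would add a sentence flagging that.
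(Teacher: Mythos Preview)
Your argument is correct and rests on the same mechanism as the paper's: enlarge a language by a sentence that is valid in every instance, so that the $CR$-class is unchanged. The paper's instantiation is slightly different and worth knowing. It takes $\L_1=\{\top,\bot\}$ and $\L_2=\L_1\cup\{\forall x\,P(x)\}$ over the schema $\D=\{P/1\}$, and observes that under active-domain semantics $\forall x\,P(x)$ is valid in \emph{every} $\D$-instance, simply because with a single unary $P$ the active domain coincides with $D(P)$; hence $CR[\L_2]=CR[\L_1]$ while $\L_1\subsetneq\L_2$. The point of that choice is that $\forall x\,P(x)$ is not a logical tautology --- its validity is an artefact of the schema and the active-domain semantics --- so the example pinpoints a genuinely first-order phenomenon. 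Your tautologies $\forall x\,(P(x)\to P(x))$ or $\forall x\,(x=x)$ work just as well for the bare statement, but they would work equally in the propositional setting of \cite{GrandiEndrissAIJ2013}; this means the burden in your write-up shifts entirely onto the closing remark that the propositional Lemma~6 must be read under closure conditions that collapse syntactic and semantic distinctness. That remark is right in spirit, but you should state those conditions explicitly rather than leave them as a ``presumably''.
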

\begin{proof}
  Consider languages $\L_1 = \{ \bot, \top \}$ and $\L_2
  = \L_1 \cup \{ \forall x P(x) \}$ on $\D = \{ P/1 \}$. By
  Lemma~\ref{aux2}.(1), $CR[\L_2] \subseteq CR[\L_1]$. Now, suppose
  that $F \in CR[\L_1]$ and consider a profile $\vec{D}$ such that
  $D_i \models \forall x P(x)$ for all $i \in \N$. By definition,
  $F(\vec{D}) \in \D(U)$. We consider two alternatives: either
  $F(\vec{D})$ is empty and then $F(\vec{D}) \models \forall x P(x)$
  trivially; or $F(\vec{D})$ is not empty, then
  $F(\vec{D})(P) \subseteq U$ and $F(\vec{D}) \models \forall x P(x)$
  as well. As a result, $CR[\L_1] \subseteq CR[\L_2]$.
\end{proof}

By Lemma~\ref{lemma6}, the operator $CR[-]$ from languages to sets of
aggregators is not injective in general.

Symmetrically, we introduce an operator $LF[-]$ from sets of
aggregators to languages.
\begin{definition}[Lifted Language] 
 Given a set $\G$ of aggregation procedures, let $LF[\G]$ be the
 language of the constraints that are lifted by all $F \in \G$:
$$LF[\G] ::= \{ \varphi \in \L_{\D} \mid \text{$F$ is $CR$
   w.r.t.~$\varphi$, for all $F \in \G$} \}$$
\end{definition}

Clearly, $LF[\G]$ is the intersection of all $LF[\{ F \}]$, for $F \in
G$.

Lemma~\ref{lemma6} has an impact on the following result, which
correspondent to Proposition 9 in \cite{GrandiEndrissAIJ2013}.  In
particular, while in \cite{GrandiEndrissAIJ2013} we have equality for
item (1), here we only have inclusion.
\begin{proposition}
 Let 
$\L$ a language 
containing $\top$ and $\bot$, and $\G$ a class of aggregators. Then, 
\begin{enumerate}
\item $\L \subseteq LF[CR[\L]]$, and this inclusion is strict for some
  languages.
\item $\G \subseteq CR[LF[G]]$, and this inclusion is strict for some classes.
\end{enumerate}
\end{proposition}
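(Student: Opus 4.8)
The plan is to prove each inclusion by unfolding the definitions of $CR[-]$ and $LF[-]$, and then to establish strictness by exhibiting concrete witnesses, drawing on the separation already obtained in Lemma~\ref{lemma6}.

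For item (1), the inclusion $\L \subseteq LF[CR[\L]]$ is a formal consequence of the Galois-connection-like relationship between the two operators. First I would take an arbitrary $\phi \in \L$; by definition of $CR[\L]$, every aggregator $F \in CR[\L]$ lifts every formula of $\L$, hence in particular lifts $\phi$; therefore $\phi$ belongs to $LF[CR[\L]]$, which by definition collects exactly those constraints lifted by all aggregators in the class $CR[\L]$. This gives the inclusion with no case analysis. For strictness, the natural witness is the language from the proof of Lemma~\ref{lemma6}: take $\L_1 = \{\bot, \top\}$ over $\D = \{P/1\}$. We showed $CR[\L_1] = CR[\L_2]$ where $\L_2 = \L_1 \cup \{\forall x\, P(x)\}$, and by item (1) applied to $\L_2$ (which also contains $\top$ and $\bot$) together with $\L_2 \subseteq LF[CR[\L_2]] = LF[CR[\L_1]]$, we get $\forall x\, P(x) \in LF[CR[\L_1]] \setminus \L_1$, so the inclusion $\L_1 \subsetneq LF[CR[\L_1]]$ is strict.

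For item (2), the inclusion $\G \subseteq CR[LF[\G]]$ is again a pure unfolding: take $F \in \G$ and an arbitrary $\phi \in LF[\G]$; by definition of $LF[\G]$, $\phi$ is lifted by every aggregator in $\G$, in particular by $F$; hence $F$ is CR w.r.t.~every $\phi \in LF[\G]$, which is exactly the statement $F \in CR[LF[\G]]$. For strictness I would look for a class $\G$ and an aggregator $F \notin \G$ that nonetheless lifts every constraint in $LF[\G]$. A clean choice is a singleton $\G = \{F_0\}$ where $F_0$ is, say, a dictatorship: dictatorships lift \emph{every} constraint in $\L_\D$ (since $F(\vec D) = D_{i^*} \models \phi$ whenever $D_{i^*} \models \phi$), so $LF[\{F_0\}] = \L_\D$, and then $CR[LF[\{F_0\}]] = CR[\L_\D]$ is the class of \emph{all} aggregators that lift every first-order constraint, which includes every other dictatorship (and intersection, union, etc.) and is therefore strictly larger than $\{F_0\}$. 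One must just check $\D$ admits at least two agents or otherwise contains enough structure so that a second constraint-universal aggregator exists; picking $n \geq 2$ and two distinct dictators suffices.

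The main obstacle is not the two inclusions, which are routine, but making the strictness claims airtight: one must verify that the chosen witnesses genuinely sit in the asserted classes. For (1) this means re-running (or citing) the argument of Lemma~\ref{lemma6} that \emph{every} $F \in CR[\{\bot,\top\}]$ also lifts $\forall x\,P(x)$ — the key point being the active-domain semantics, under which $F(\vec D) \models \forall x\,P(x)$ holds whenever $F(\vec D)(P)$ contains all of $\adom(F(\vec D))$, which is automatic since $F(\vec D)(P) \subseteq \adom(F(\vec D))$ can only fail if... (in fact $\forall x P(x)$ on a one-relation schema is satisfied by \emph{every} instance, because $\adom(D) = D(P)$). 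For (2) the subtle point is confirming $LF[\{F_0\}] = \L_\D$ for a dictatorship, i.e.~that a dictatorship lifts arbitrary sentences, including ones with negation and nested quantifiers; this is immediate from $F_0(\vec D) = D_{i^*}$ but worth stating explicitly. I would close by noting, as the paper already flags, that the loss of equality in item (1) is precisely the first-order phenomenon witnessed by Lemma~\ref{lemma6}, in contrast to the propositional case of \cite{GrandiEndrissAIJ2013}.
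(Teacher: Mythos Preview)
Your proposal is correct. For item~(1) you follow essentially the same path as the paper: the inclusion is an immediate unfolding of the definitions, and strictness is witnessed by the languages $\L_1\subset\L_2$ from Lemma~\ref{lemma6}, exactly as the paper does.

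For item~(2) you diverge from the paper. The paper does not construct a witness itself but simply cites Proposition~9 of \cite{GrandiEndrissAIJ2013}, remarking that a class $\G$ not containing generalised dictatorships gives strict inclusion. Your approach---taking $\G=\{F_0\}$ to be a singleton dictatorship, observing that $LF[\{F_0\}]=\L_\D$ since a dictatorship lifts every sentence, and then noting that $CR[\L_\D]$ contains a second dictator whenever $n\geq 2$---is more elementary and fully self-contained, which is a genuine advantage over deferring to an external reference. One small slip: your parenthetical claim that intersection and union also lie in $CR[\L_\D]$ is false (intersection fails to lift $\forall x\,(P(x)\to\exists y\,Q(x,y))$, as in the paradox example in Section~\ref{sec:lifting}, and union fails to lift functional dependencies by Proposition~\ref{prop1}); but your argument does not actually rely on this, since the second dictator alone already establishes strictness.
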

\begin{proof}
As regards (1), inclusion $\L \subseteq LF[CR[\L]]$ is an immediate
consequence of the definitions of $CR$ and $LF$.  On the other hand,
consider languages $\L_1 = \{ \bot, \top \}$ and $\L_2 = \L_1 \cup \{
\forall x P(x) \}$ in the proof of Lemma~\ref{lemma6}. We have
$CR[\L_1] = CR[\L_2]$, and therefore $LF[CR[\L_1]] = LF[CR[\L_2]]$,
but $\L_1 \subset \L_2$, and therefore $\L_1 \subset LF[CR[\L_1]]$.

As for (2), inclusion $\G \subseteq CR[LF[\G]]$ is also an immediate
consequence of the definitions of $CR$ and $LF$. Further,
in \cite{GrandiEndrissAIJ2013} Proposition 9, it is given a class
(basically, $\G$ does not contain generalised dictatorships) for which
this inclusion is strict.
\end{proof}

To conclude, the relationship between operators $CR[-]$ and $LF[-]$
can be represented as in Fig.~\ref{fig1}.
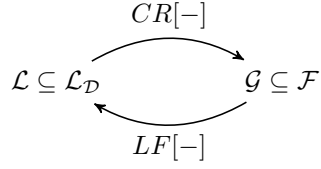
\begin{figure} 
\begin{center}
\begin{tikzpicture}[auto,node distance=3cm,->,>=stealth',shorten
>=1pt,semithick]

\tikzstyle{every state}=[fill=white,text=black,minimum size=0.3cm]
\tikzstyle{every initial by arrow}=[initial text=]

\node[]  (t0) {$\L \subseteq \L_{\D}$};
\node[] (t1) [right of = t0] {$\G \subseteq \F$};
\path 	(t0) 	edge[bend left] node {$CR[-]$}  (t1);
\path 	(t1) 	edge[bend left] node {$LF[-]$}  (t0);
\end{tikzpicture}
\caption{the operators $CR[-]$ and $LF[-]$. \label{fig1}}
\end{center}
\end{figure}
The two operators are inverse one to the other, but they do not
commute.

\section{Characterisation Results}\label{sec:characterisations}

In this section we show some correspondences between axiomatic
properties and restrictions to the first order language in which
integrity constraints can be expressed, in line with previous work by
Grandi and Endriss \cite{GrandiEndrissAIJ2013}. We then focus on
the database-specific constraints introduced in
Section~\ref{sec:preliminaries}, showing sufficient and necessary
conditions for collective rationality of quota rules.

To state the next result we consider a set $Con \subseteq U$ of
constants,
interpreted as themselves in each $D_i$, that is, $\sigma(c) = c$ for
every $c \in Con$. Then, let $lit^+ \subseteq \L_{\D}$ be some language containing
only positive literals of form $P(c_1, \ldots, c_q)$, 
for $P\in \D$ and constants $c_1, \ldots, c_q$.

\begin{theorem} \label{theor1}
$\F_{lit^+}[U] \subseteq CR[lit^+]$, and $\F_{lit^+}[U] \supseteq CR[lit^+]$
only if $Con$ contains all individuals in the domain of $F$.
\end{theorem}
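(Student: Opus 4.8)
The statement has two halves. The first, $\F_{lit^+}[U] \subseteq CR[lit^+]$, I would settle by a direct unfolding of the definitions. The second, the claim that $CR[lit^+] \subseteq \F_{lit^+}[U]$ forces $Con$ to contain every individual of the domain, I would prove by contraposition, constructing an explicit aggregator that lifts all positive ground literals but violates unanimity on a suitable restricted domain.

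For the inclusion, fix $F \in \F_{lit^+}[U]$ and an arbitrary literal $\phi = P(c_1,\dots,c_q) \in lit^+$; I must show $F$ is collectively rational with respect to $\phi$. Take any profile $\vec D$ with $D_i \models \phi$ for all $i \in \N$. Then every $D_i$ lies in $\{D \in \D(U) \mid D \models \phi\}$, the set on which $F$ is assumed to satisfy unanimity, so $F(\vec D)(P) \supseteq \bigcap_{i \in \N} D_i(P)$. Since constants are interpreted as themselves, $D_i \models \phi$ means $\langle c_1,\dots,c_q \rangle \in D_i(P)$ for every $i$, hence $\langle c_1,\dots,c_q \rangle \in \bigcap_{i} D_i(P) \subseteq F(\vec D)(P)$, i.e.\ $F(\vec D) \models \phi$. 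As $\phi$ was arbitrary in $lit^+$, this gives $F \in CR[lit^+]$.

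For the converse, I would show: if some individual $u_0$ of the domain does not belong to $Con$, then $CR[lit^+] \not\subseteq \F_{lit^+}[U]$. Pick a relation symbol $P_0 \in \D$ of positive arity $q_0$ and put $\vec w_0 = \langle u_0, \dots, u_0 \rangle \in U^{q_0}$. Define $F$ to act as the intersection rule on all symbols except that it removes $\vec w_0$ from $P_0$: $F(\vec D)(P) = \bigcap_i D_i(P)$ for $P \neq P_0$ and $F(\vec D)(P_0) = \bigl(\bigcap_i D_i(P_0)\bigr) \setminus \{\vec w_0\}$; each output relation is finite, so this is a genuine aggregator. I would then verify two facts. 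First, $F \in CR[lit^+]$: a literal $P(c_1,\dots,c_q) \in lit^+$ involves only constants from $Con$, and since $u_0 \notin Con$ its tuple is never $\vec w_0$; thus if every $D_i$ satisfies it then $\langle c_1,\dots,c_q\rangle$ survives both the intersection and the removal of $\vec w_0$, so $F(\vec D)$ still satisfies it. Second, $F \notin \F_{lit^+}[U]$: choosing a constant tuple $\vec c_0 \in Con^{q_0}$ and letting $\phi_0 = P_0(\vec c_0) \in lit^+$, consider the profile all of whose components equal the instance $D$ with $D(P_0) = \{\vec c_0, \vec w_0\}$ and $D(P) = \emptyset$ otherwise; this profile lies in $\{D \mid D \models \phi_0\}^n$, yet $\vec w_0 \in \bigcap_i D_i(P_0)$ while $\vec w_0 \notin F(\vec D)(P_0)$, so $F$ fails unanimity on that restricted domain. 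Hence $F \in CR[lit^+] \setminus \F_{lit^+}[U]$, which is the desired contrapositive.

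The idea that makes the second half work — and what I would emphasise rather than treat as an obstacle — is that unanimity ranges over all tuples in $U^{q}$, whereas a language of positive ground literals over $Con$ can pin down, and so force into the output, only tuples whose every entry is named by a constant; any unnamed individual leaves room to drop an unanimously supported but unnamed tuple. The care points are minor: $P_0$ must have positive arity so that $\vec w_0$ is genuinely unnamed, and $lit^+$ should be non-empty so that $\vec c_0$ and $\phi_0$ exist (the degenerate case $Con = \emptyset$ with only positive arities makes $lit^+$ empty and both classes equal to $\F$, and is outside the intended reading). I expect the fiddliest part to be exactly this bookkeeping in the second verification — ensuring the witnessing $\phi_0$ stays inside $lit^+$ while the deleted tuple $\vec w_0$ stays outside its reach — though it is entirely routine.
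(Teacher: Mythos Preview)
Your first half is the paper's argument verbatim: unanimity on the restricted domain forces the unanimously supported constant tuple into $F(\vec D)(P)$.

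For the second half you and the paper part ways, though not because either argument is wrong. You take the phrase ``only if'' at face value and prove necessity by contraposition: assuming some $u_0 \notin Con$, you exhibit an aggregator (intersection with the tuple $\vec w_0 = (u_0,\dots,u_0)$ excised from $P_0$) that lifts every literal in $lit^+$ --- since no such literal can name $\vec w_0$ --- yet fails unanimity on a domain defined by some $\phi_0 \in lit^+$. The paper instead proves \emph{sufficiency}: it assumes $Con$ names every individual, takes any $F \in CR[lit^+]$ and any profile with $\vec u \in \bigcap_i D_i(P)$, observes that $P(u_1,\dots,u_q)$ is then itself a literal in $lit^+$ satisfied by every $D_i$, and invokes collective rationality to get $\vec u \in F(\vec D)(P)$, i.e.\ unanimity. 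So the paper's proof establishes the ``if'' direction of the reverse inclusion, while yours establishes the ``only if'' direction as literally stated; together they yield the ``iff'' the paper asserts in the sentence immediately following the theorem. Your construction is sound (with the positive-arity and non-emptiness caveats you already flag), and it contributes something the paper's own proof does not: an explicit witness that the inclusion can fail when $Con$ misses an individual.
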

\begin{proof}
As to inclusion $\subseteq$, we see that if all instances
$D_1, \ldots, D_n$ satisfy formulas $P(c_1, \ldots, c_q)$ in $lit^+$,
then $\vec{c} \in D_i(P)$ for every $i \in \N$. By unanimity we have
that $\bigcap_{i \in \N} D_i(P) \subseteq F(\vec{D})(P)$, and
therefore $\vec{c} \in F(\vec{D})(P)$. Hence, $F$ is collectively
rational on $lit^+$.

As to $\supseteq$, suppose that $F \in CR[lit^+]$ and choose a profile
$D_1, \ldots, D_n$ with $\vec{u} \in \bigcap_{i \in \N} D_i(P)$, that
is, for every $i \in \N$, $D_i \models P(u_1, \ldots, u_q)$. Since we
assumed that $Con$ contains all individuals in the domain of $F$,
individuals $u_1, \ldots, u_q$ belong to $Con$ and formulas
$P(u_1, \ldots, u_q)$ are in $lit^+$. Further, $F$ is CR on
$D_1, \ldots, D_n$ and therefore $F(\vec{D}) \models P(u_1, \ldots, u_q)$, that
is, $\vec{u} \in F(\vec{D})(P)$, which mean that $F$ is unanimous.
\end{proof}

By Theorem~\ref{theor1} an aggregator $F$ is collectively rational on
a language $lit^+$ with positive literals only iff it is unanimous on
the class of instances satisfying the very same positive literals.

A symmetric result holds for the axiom of groundedness and any
 language $lit^- \subseteq \L_{\D}$ containing only {\em negative}
 literals of form $\neg P(c_1, \ldots, c_q)$.  The proof is similar,
 so we omit it.
\begin{theorem} \label{theor2}
$\F_{lit^-}[G] \subseteq CR[lit^-]$, and $\F_{lit^-}[G] \supseteq CR[lit^-]$
only if $Con$ contains all individuals in the domain of $F$.
\end{theorem}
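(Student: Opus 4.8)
The plan is to mirror the proof of Theorem~\ref{theor1}, interchanging the roles of positive and negative literals, of the axioms $U$ and $G$, and of the bounds $\bigcap_{i\in\N}D_i(P)$ and $\bigcup_{i\in\N}D_i(P)$.

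For the inclusion $\F_{lit^-}[G] \subseteq CR[lit^-]$, I would fix $F \in \F_{lit^-}[G]$ and an arbitrary sentence $\neg P(c_1,\dots,c_q) \in lit^-$, and take any profile $\vec D$ with $D_i \models \neg P(c_1,\dots,c_q)$ for all $i \in \N$. Then $\langle c_1,\dots,c_q\rangle \notin D_i(P)$ for every $i$, hence $\langle c_1,\dots,c_q\rangle \notin \bigcup_{i\in\N} D_i(P)$. Since, by definition of $\F_{lit^-}[G]$, the aggregator $F$ satisfies groundedness on the class of profiles whose instances all satisfy $\neg P(c_1,\dots,c_q)$, we get $F(\vec D)(P) \subseteq \bigcup_{i\in\N} D_i(P)$, and therefore $\langle c_1,\dots,c_q\rangle \notin F(\vec D)(P)$, i.e.\ $F(\vec D) \models \neg P(c_1,\dots,c_q)$. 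As the literal was arbitrary, $F$ is collectively rational on $lit^-$.

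For the converse, I would argue as in Theorem~\ref{theor1} that, under the hypothesis that $Con$ contains all individuals in the domain of $F$, every $F \in CR[lit^-]$ is grounded on the relevant classes. Take a profile $\vec D$, a symbol $P$, and a tuple $\vec u \in F(\vec D)(P)$; the goal is $\vec u \in \bigcup_{i\in\N} D_i(P)$. Suppose not, so $\vec u \notin D_i(P)$ and hence $D_i \models \neg P(u_1,\dots,u_q)$ for every $i \in \N$. Since $u_1,\dots,u_q$ all lie in $Con$, the sentence $\neg P(u_1,\dots,u_q)$ is in $lit^-$, and collective rationality of $F$ on it gives $F(\vec D) \models \neg P(u_1,\dots,u_q)$, i.e.\ $\vec u \notin F(\vec D)(P)$ --- contradicting the choice of $\vec u$. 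Thus $F$ is grounded; the ``only if'' qualification reflects, exactly as for $lit^+$ and $U$, that if some individual is missing from $Con$ one can build a profile placing that individual in $F(\vec D)(P)$ without it occurring in any $D_i(P)$ while all sentences of $lit^-$ remain (vacuously) satisfied, so the inclusion $\supseteq$ fails.

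There is no genuine obstacle here: the argument is the exact dual of Theorem~\ref{theor1}. The only points needing care are the bookkeeping in the definition of $\F_{lit^-}[G]$ --- groundedness is invoked only on profiles all of whose instances satisfy the relevant negative literal, which is precisely what that definition supplies --- and the role of the hypothesis on $Con$, which ensures that every tuple over the domain corresponds to an expressible negative literal, so that collective rationality is strong enough to force groundedness.
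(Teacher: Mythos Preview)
Your proposal is correct and follows exactly the approach the paper intends: the paper omits the proof of Theorem~\ref{theor2}, stating only that it is symmetric to that of Theorem~\ref{theor1}, and your argument carries out precisely this dualisation (swapping $U$ for $G$, positive for negative literals, $\bigcap$ for $\bigcup$). If anything, you go slightly beyond the paper by spelling out the ``only if'' direction, which the paper's proof of Theorem~\ref{theor1} does not actually address.
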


From Theorem~\ref{theor1} and \ref{theor2}, we immediately obtain
the following corollary by the lemmas in section~\ref{sec:lifting},
where $lit = lit^+ \cup lit^-$.
\begin{corollary}
$\F_{lit}[U,G] \subseteq CR[lit]$, and $\F_{lit}[U,G] \supseteq CR[lit]$
only if $Con$ contains all individuals in the domain of $F$.
\end{corollary}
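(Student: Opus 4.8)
The plan is to handle the two assertions of the corollary separately, in each case reducing to the corresponding claim for the one-signed fragments $lit^+$ and $lit^-$ already settled in Theorems~\ref{theor1} and~\ref{theor2}, and then stitching the pieces together using the algebra of the operators $CR[-]$ and $\F_\L[-]$ from Lemmas~\ref{aux2} and~\ref{lemma8}.

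For the inclusion $\F_{lit}[U,G] \subseteq CR[lit]$ I would first split both sides along $lit = lit^+ \cup lit^-$: Lemma~\ref{aux2}.(2) gives $CR[lit] = CR[lit^+] \cap CR[lit^-]$, and Lemma~\ref{lemma8}.(2) gives $\F_{lit}[U,G] = \F_{lit}[U] \cap \F_{lit}[G]$. Next I would invoke anti-monotonicity of $\F_\L[-]$ in the language (Lemma~\ref{lemma8}.(1)): from $lit^+ \subseteq lit$ we get $\F_{lit}[U] \subseteq \F_{lit^+}[U]$, and the $\subseteq$ half of Theorem~\ref{theor1} gives $\F_{lit^+}[U] \subseteq CR[lit^+]$, so $\F_{lit}[U] \subseteq CR[lit^+]$; symmetrically $\F_{lit}[G] \subseteq \F_{lit^-}[G] \subseteq CR[lit^-]$ via Theorem~\ref{theor2}. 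Intersecting the two chains yields $\F_{lit}[U,G] \subseteq CR[lit^+] \cap CR[lit^-] = CR[lit]$, as desired.

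The substantive part is the necessity claim, namely that $\F_{lit}[U,G] \supseteq CR[lit]$ can hold only when $Con$ contains every individual of the domain. I would prove its contrapositive: assuming some individual $u^*$ of the domain lies outside $Con$, I exhibit an aggregator $F \in CR[lit]$ with $F \notin \F_{lit}[U,G]$, so that $CR[lit] \not\subseteq \F_{lit}[U,G]$. The pivotal observation, shared with the necessity directions of Theorems~\ref{theor1} and~\ref{theor2}, is that every sentence in $lit$ is a literal whose constants all come from $Con$, hence it constrains the aggregate only on tuples all of whose entries lie in $Con$; on any tuple mentioning $u^*$ the aggregator is completely free as far as $CR[lit]$ is concerned. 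Accordingly I would let $F$ coincide with the intersection rule on $Con$-tuples and reject every tuple that has an entry outside $Con$. On $Con$-tuples the intersection behaviour preserves both unanimous acceptance (so $F$ lifts every positive literal) and unanimous rejection (so $F$ lifts every negative literal), placing $F$ in $CR[lit^+] \cap CR[lit^-] = CR[lit]$ by Lemma~\ref{aux2}.(2).

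It remains to certify $F \notin \F_{lit}[U,G]$, and here lies the main (if mild) obstacle: I must violate an axiom while staying inside one of the constraint-restricted domains $\{D \mid D \models \phi\}$ with $\phi \in lit$, since that is what membership in $\F_{lit}[U,G]$ quantifies over. I would fix a tuple $\vec u$ containing $u^*$, choose any $Con$-tuple $\vec c$ of the same arity (necessarily distinct from $\vec u$), and take the profile in which $D_i(P) = \{\vec u\}$ for every $i \in \N$ and all other relations are empty. Each $D_i$ then satisfies the negative literal $\neg P(\vec c) \in lit^-$, yet $\bigcap_{i \in \N} D_i(P) = \{\vec u\}$ while $F(\vec D)(P) = \emptyset$, so $F$ breaks unanimity on $\{D \mid D \models \neg P(\vec c)\}$ and therefore $F \notin \F_{lit}[U]$; since $\F_{lit}[U,G] \subseteq \F_{lit}[U]$ by Lemma~\ref{lemma8}.(2), also $F \notin \F_{lit}[U,G]$. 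As $F \in CR[lit]$, this refutes the inclusion whenever $Con$ omits an individual, which is exactly the contrapositive of the stated ``only if''.
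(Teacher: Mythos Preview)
Your argument for $\F_{lit}[U,G] \subseteq CR[lit]$ is essentially the paper's: split by Lemma~\ref{lemma8}.(2), drop to $lit^{\pm}$ via Lemma~\ref{lemma8}.(1), apply Theorems~\ref{theor1} and~\ref{theor2}, and reassemble with Lemma~\ref{aux2}.

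For the necessity clause the paper writes only ``the other inclusion is proved similarly''. You instead argue the contrapositive directly, building an explicit $F \in CR[lit] \setminus \F_{lit}[U,G]$ whenever some $u^{*} \notin Con$. This is a genuinely different and in fact more careful route: the naive ``similar'' algebraic combination has a direction problem, since a witness $F \in CR[lit^{+}] \setminus \F_{lit^{+}}[U]$ supplied by Theorem~\ref{theor1} need not also lie in $CR[lit^{-}]$ and hence need not lie in $CR[lit]$. Your hand-built $F$ (intersection on $Con$-tuples, blanket rejection on tuples mentioning a non-$Con$ element) is engineered to sit in $CR[lit^{+}] \cap CR[lit^{-}]$ simultaneously, which is exactly the step the paper's ``similarly'' leaves implicit. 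The one small extra assumption your construction relies on is that $lit^{-}$ actually contains some literal $\neg P(\vec c)$, so that the domain on which unanimity is violated is indexed by a formula of $lit$; this edge case is not addressed in the paper either.
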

\begin{proof}
As to inclusion $\subseteq$, by Lemma~\ref{lemma8}.(2), $\F_{lit}[U,G]
= \F_{lit}[U] \cap \F_{lit}[G]$, and by Lemma~\ref{lemma8}.(1)
$\F_{lit}[U] \cap \F_{lit}[G] \subseteq \F_{lit^+}[U] \cap \F_{lit^-}[G]$. Then,
by Theorem~\ref{theor1} and \ref{theor2},
$\F_{lit^+}[U] \cap \F_{lit^-}[G] \subseteq CR[lit^+] \cap
CR[lit^-]$. Finally, by Lemma~\ref{aux2}.(1) $CR[lit^+] \cap
CR[lit^-] \subseteq CR[lit]$. The other inclusion is proved similarly.
\end{proof}

Notice that, differently from the propositional case \cite[Theorem
10]{GrandiEndrissAIJ2013}, here we need both axioms of unanimity and
groundedness to preserve both positive and negative literals, while
for propositional literals unanimity suffices. Hence, also simple
results do not transfer immediately from the propositional to the
first-order setting.


Next, define $\L_{\leftrightarrow}$ as the language of equivalences
$\forall \vec{x} \vec{x}' (P(\vec{x}) \leftrightarrow P'(\vec{x}'))$
for relation symbols $P, P'
\in \D$. We show the following:

\begin{theorem}\label{theor3}
$ CR[\L_{\leftrightarrow}] = \F_{\leftrightarrow}[N^{+}]$
\end{theorem}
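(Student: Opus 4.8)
The plan is to prove the two inclusions $\F_{\leftrightarrow}[N^+] \subseteq CR[\L_{\leftrightarrow}]$ and $CR[\L_{\leftrightarrow}] \subseteq \F_{\leftrightarrow}[N^+]$ separately, exploiting the key observation that a sentence $\forall \vec{x}\,\vec{x}'(P(\vec{x}) \leftrightarrow P'(\vec{x}'))$ is satisfied by an instance $D$ exactly when $D(P)$ and $D(P')$ are both full (equal to $U^{q}$ on the relevant arities, under active-domain semantics: both are the set of all tuples over $\adom(D)$) or both empty — in other words, when every tuple gets the same status under $P$ as every tuple gets under $P'$. First I would spell this equivalence out carefully, since the active-domain semantics makes "$P(\vec{x})$" range over $\adom(D)$, so "full" means $D(P) = \adom(D)^{q}$; the upshot is that $D \models \forall \vec{x}\,\vec{x}'(P(\vec{x})\leftrightarrow P'(\vec{x}'))$ iff for all $\vec{u},\vec{u}'$ over the active domain, $\vec{u}\in D(P) \iff \vec{u}'\in D(P')$.

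For $\subseteq$: suppose $F$ satisfies positive neutrality on every instance-class $\{D \mid D\models\phi\}$ with $\phi\in\L_{\leftrightarrow}$, and let $\phi = \forall\vec{x}\,\vec{x}'(P(\vec{x})\leftrightarrow P'(\vec{x}'))$ with $D_i\models\phi$ for all $i$. Then for each $i$, the set of agents accepting any tuple $\vec{u}$ for $P$ equals the set accepting any tuple $\vec{u}'$ for $P'$ (both are either all of $\N$ or determined uniformly across tuples — more precisely $N^{\vec{D}(P)}_{\vec{u}} = N^{\vec{D}(P')}_{\vec{u}'}$ for all $\vec{u},\vec{u}'$, because in each $D_i$ membership in $P$ is tuple-independent and coincides with membership in $P'$). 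Applying the version of $N^+$ that holds on the class $\{D\mid D\models\phi\}$ — and here I need to be slightly careful that systematicity-style reasoning across two symbols is what is wanted, so I'd use the hypothesis as stated, possibly invoking that $N^+$ together with the structure of $\phi$ forces $\vec{u}\in F(\vec{D})(P) \iff \vec{u}'\in F(\vec{D})(P')$ — gives that $F(\vec{D})(P)$ is full iff $F(\vec{D})(P')$ is full and empty iff empty, hence $F(\vec{D})\models\phi$.

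For $\supseteq$: suppose $F\in CR[\L_{\leftrightarrow}]$; I want to show $F$ satisfies $N^+$ on each class $\{D\mid D\models\phi\}$ for $\phi\in\L_{\leftrightarrow}$. Given a profile in that class and two tuples $\vec{u},\vec{u}'$ for the relevant symbol(s) with $N^{\vec{D}(P)}_{\vec{u}} = N^{\vec{D}(P)}_{\vec{u}'}$, I'd build an auxiliary argument: since all $D_i$ already satisfy the equivalence $\phi$, each $D_i(P)$ is full-or-empty, so actually $N^{\vec{D}(P)}_{\vec{u}}$ is the same set for every tuple, and CR w.r.t.\ $\phi$ forces $F(\vec{D})(P)$ to be full-or-empty as well, which immediately yields $\vec{u}\in F(\vec{D})(P)\iff\vec{u}'\in F(\vec{D})(P)$, i.e.\ $N^+$. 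The main obstacle, and the point I'd handle most carefully, is the interaction with active-domain semantics and the precise scope of "$F$ satisfies $N^+$ w.r.t.\ language $\L_{\leftrightarrow}$": whether the neutrality being extracted is the single-symbol $N^+$ or effectively a two-symbol (systematicity-like) property, and making sure the quantification in $\phi$ over a product of arities $P/q$ and $P'/q'$ is handled — in particular that "full" is interpreted consistently when $q\neq q'$. I expect the clean statement to restrict attention to symbols of equal arity or to read $\phi$ as forcing both relations to be simultaneously trivial, and I would state that reading explicitly before running the two inclusions.
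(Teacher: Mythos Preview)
Your approach is essentially the paper's: prove both inclusions by exploiting that $D\models\forall\vec{x}\,\vec{x}'(P(\vec{x})\leftrightarrow P'(\vec{x}'))$ forces $D(P)$ and $D(P')$ to be simultaneously ``all tuples'' or ``no tuples'' over the active domain, so that acceptance patterns for any two tuples coincide across $P$ and $P'$. Your write-up is in fact more careful than the paper's, which is rather terse on both directions.

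The concern you flag about the scope of $N^{+}$ is well placed and worth keeping explicit. As stated in the axiom list, $N^{+}$ compares two tuples for the \emph{same} symbol $P$, yet the paper's own proof of the $\subseteq$ direction applies it across two symbols $P$ and $P'$ (it assumes $N_{\vec{u}}^{\vec{D}(P)}=N_{\vec{u}'}^{\vec{D}(P')}$ and concludes $\vec{u}\in F(\vec{D})(P)\iff\vec{u}'\in F(\vec{D})(P')$). So the intended reading of $N^{+}$ in this theorem is the cross-symbol one; you should simply adopt that reading rather than try to squeeze the result out of the single-symbol version. Likewise, the paper's sentence ``This implies that for every $i\in\N$, $D_i\models\ldots$'' in the $CR\subseteq\F[N^+]$ direction is not a genuine implication from the equality of supports; it is just the standing assumption that the profile lies in $\{D\mid D\models\phi\}$, exactly as you wrote it. Your worries about arity mismatch and the active-domain meaning of ``full'' are legitimate hygiene points the paper does not address; stating once that ``full'' means $\adom(D)^{q}$ and that the argument is uniform in $q,q'$ is all that is needed.
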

\begin{proof}
 As for inclusion $\supseteq$, pick an equivalence
 $\forall \vec{x}, \vec{x}'(P(\vec{x}) \leftrightarrow
 P'(\vec{x}'))$. This defines a database in which relation symbols $P$
 and $P'$ share the same pattern of acceptance/rejection, and since
 aggregator $F$ is neutral over issues, we get
 $F(\vec{D}) \models \forall \vec{x}, \vec{x}'(P(\vec{x}) \leftrightarrow
 P'(\vec{x}'))$. Therefore, the constraint given by the initial
 equivalence is lifted. \\
 As for inclusion $\subseteq$, suppose that a profile $\vec{D}$ is
 such that $N_{\vec{u}}^{\vec{D}(P)} =
 N_{\vec{u}'}^{\vec{D}(P')}$. This implies that for every $i \in \N$,
 $\vec{D}_i \models \forall \vec{x}, \vec{x}'(P(\vec{x}) \leftrightarrow
 P'(\vec{x}'))$, and since $F$ is in $CR[\L_{\leftrightarrow}]$,
 $\vec{u} \in F(D)(P)$ iff $\vec{u}' \in F(D)(P')$. This holds for
 every such profile $\vec{D}$, proving that $F$ is neutral.
\end{proof}

By Theorem~\ref{theor3} an aggregator $F$ is collectively rational on
language $\L_{\leftrightarrow}$ iff it is positively neutral on the
class of instances satisfying all formulas in $\L_{\leftrightarrow}$.


Let us now define the following class:

\begin{definition}[Generalised dictatorship]
 An aggregation procedure $F : \D(U)^n \to \D(U)$ is a {\em
   generalised dictatorship} if there exists a map $g: \D(U)^n \to \N$
   such that for every $\vec{D} \in \D(U)^n$, $F(\vec{D}) =
   D_{g(\vec{D})}$.  Let $GDIC$ be the class of all generalised
   dictatorships.
\end{definition}

Generalised dictatorships include classical dictatorships, but also
more interesting procedures known as \emph{most representative voters
rules}, which selects the individual input that best summarises a
given profile.  Clearly, since each single instance satisfies the
given set of constraints, a generalised dictatorship is
collectively rational with respect to the full first-order language.
 \begin{theorem} $GDIC \subset CR[\L_{\D}]$ \end{theorem}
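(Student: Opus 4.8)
The plan is to prove the two parts separately: the inclusion $GDIC \subseteq CR[\L_{\D}]$, and then its strictness.

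For the inclusion the argument is immediate, and I would only spell it out briefly: if $F$ is a generalised dictatorship with witnessing map $g : \D(U)^n \to \N$, and $\vec{D}$ is a profile with $D_i \models \phi$ for every $i \in \N$, then $F(\vec{D}) = D_{g(\vec{D})} \models \phi$ since $g(\vec{D}) \in \N$. Hence $F$ lifts every sentence of $\L_{\D}$, i.e.\ $F \in CR[\L_{\D}]$.

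For strictness, the key observation is that collective rationality with respect to the \emph{full} language $\L_{\D}$ does not force $F(\vec{D})$ to be literally one of the input instances: it only forces $F(\vec{D})$ to satisfy every first-order sentence satisfied by all the $D_i$, and first-order satisfaction over finite relational structures (under the active-domain semantics of Definition~\ref{def:fo-sem}) is invariant under isomorphism. So I would fix a non-identity permutation $\rho : U \to U$ of the domain and define the aggregator $F(\vec{D}) := \rho(D_1)$, i.e.\ the instance obtained by renaming every individual in $D_1$ according to $\rho$; since $\rho$ is a bijection of $U$ this is a well-defined total function $\D(U)^n \to \D(U)$. This $F$ lies in $CR[\L_{\D}]$: for any sentence $\phi$ with $D_i \models \phi$ for all $i$, in particular $D_1 \models \phi$, and $\rho$ restricts to an isomorphism from $D_1$ onto $\rho(D_1) = F(\vec{D})$, hence $F(\vec{D}) \models \phi$. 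On the other hand $F$ is not a generalised dictatorship: picking $a \in U$ with $\rho(a) \neq a$, a relation symbol $P/q \in \D$, and the instance $D^{\ast}$ that interprets $P$ as $\{(a,\dots,a)\}$ and every other symbol as $\emptyset$, the constant profile $\vec{D} = (D^{\ast},\dots,D^{\ast})$ gives $F(\vec{D}) = \rho(D^{\ast}) \neq D^{\ast}$, so $F(\vec{D})$ differs from $D_i$ for every $i \in \N$, and no choice of $g(\vec{D}) \in \N$ can reproduce it.

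I expect no serious obstacle here. The only points requiring a little care are making sure the counterexample $F$ is a genuinely well-defined element of $\F$ (which follows from $\rho$ being a bijection of the whole domain $U$), and recording explicitly the standard fact that first-order sentences over $\D$ are preserved by isomorphisms of finite structures under the active-domain semantics, which is what makes $\rho(D_1)$ satisfy exactly the same sentences as $D_1$. It is also worth noting that the construction presupposes $\D$ contains at least one relation symbol and $|U| \geq 2$; in the degenerate cases there is essentially a unique aggregator and the inclusion collapses to an equality, but these cases are excluded by the standing assumptions on $\D$ and $U$.
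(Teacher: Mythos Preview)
Your proof is correct and follows essentially the same approach as the paper: the inclusion is immediate from the definition of generalised dictatorship, and strictness is witnessed by the aggregator $F(\vec{D}) = \rho(D_1)$ for a non-identity permutation $\rho$ of $U$. You are in fact more careful than the paper in two places: you insist that $\rho$ be non-identity and you exhibit an explicit constant profile on which $F(\vec{D})$ differs from every $D_i$, whereas the paper simply asserts ``$D_1 \neq \rho(D_1)$'' without fixing a witness.
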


Observe that while for binary aggregation the theorem above is an
equality \cite[][Theorem16]{GrandiEndrissAIJ2013}, this is not the
case for database aggregation. This is due to the fact that the
first-order language cannot specify uniquely a given database
instance. The proof of this fact is rather immediate: consider a
dictatorship of the first agent, modified by permuting all the
elements in $U$. That is, $F(\vec{D})=\rho(D_1)$ where $\rho:U\to U$
is any permutation. Clearly, $D_1 \neq \rho(D_1)$, but all constraints
that were satisfied by $D_1$ are also satisfied by $\rho(D_1)$. Hence,
this aggregator is collectively rational with respect to the full
first-order language $\L_{\D}$, but is not a generalised dictatorship.



We now turn our attention to integrity constraints proper to databases. We begin with functional dependencies.

\begin{proposition} \label{prop1}
A quota rule lifts a functional constraint iff $q_P > \frac{n}{2}$ for
all relation symbols $P$ occurring in the functional constraint.
\end{proposition}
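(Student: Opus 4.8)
The plan is to prove both directions by constructing or analysing profiles in which every individual database satisfies the functional dependency but the quota rule may violate it. Recall that a functional dependency $n_1,\dots,n_k \mapsto n_{k+1},\dots,n_q$ for $P$ is violated exactly when $F(\vec D)(P)$ contains two tuples $\vec u, \vec u'$ that agree on the first $k$ coordinates but differ on some later coordinate $i$. So the whole argument is about when a quota rule can be forced to accept such a conflicting pair.

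\textbf{Sufficiency ($q_P > n/2$ for all relevant $P$ implies lifting).} Suppose toward a contradiction that every $D_i \models \phi$ but $F(\vec D)(P)$ contains conflicting tuples $\vec u$ and $\vec u'$ (agreeing on the key, differing on coordinate $i > k$). By the definition of a quota rule, at least $q_P$ agents have $\vec u \in D_i(P)$ and at least $q_P$ agents have $\vec u' \in D_j(P)$. Since $q_P > n/2$, by inclusion-exclusion these two sets of agents overlap, so some single agent $\ell$ has both $\vec u$ and $\vec u'$ in $D_\ell(P)$. But then $D_\ell$ itself violates the functional dependency, contradicting $D_\ell \models \phi$. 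Hence $F(\vec D) \models \phi$. (One should also note the trivial case: if $P$ does not occur in the functional constraint at all, the constraint places no demand on $F(\vec D)(P)$, which is why the quantification in the statement is only over symbols occurring in the constraint.)

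\textbf{Necessity (if some relevant $P$ has $q_P \le n/2$, the rule fails to lift).} Here I would exhibit a profile of databases each satisfying $\phi$ whose aggregate does not. Fix the symbol $P$ occurring in the constraint with $q_P \le n/2$, write $q = q_P$, and pick two tuples $\vec u, \vec u'$ over the active domain that agree on coordinates $1,\dots,k$ but differ on some coordinate in $\{k{+}1,\dots,q\}$ (this is possible since $q > k$ for a genuine functional dependency; if $k = q$ the "dependency" is vacuous). Put $\vec u$ into $D_1(P),\dots,D_q(P)$ and $\vec u'$ into $D_{q+1}(P),\dots,D_{2q}(P)$ — these two index blocks are disjoint because $2q \le n$ — and let every $D_i(P)$ contain no other tuples (and other relation symbols be empty, or chosen so that $\phi$'s treatment of them is satisfied). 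Then each $D_i(P)$ is a singleton or empty, so each $D_i$ trivially satisfies the functional dependency, while $|\{i : \vec u \in D_i(P)\}| = q = q_P$ and likewise for $\vec u'$, so the quota rule places both $\vec u$ and $\vec u'$ in $F(\vec D)(P)$, violating $\phi$.

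\textbf{Main obstacle.} The delicate point in the necessity direction is handling the degenerate boundary cases cleanly: when $n$ is odd and $q_P = \lceil n/2 \rceil$ we have $q_P > n/2$ so that rule is covered by sufficiency; when $q_P = \lfloor n/2 \rfloor$ (or smaller) we have $2q_P \le n$, so the two disjoint blocks of agents fit. I would also need to make sure the quota function is allowed to take the value I want on the two specific tuples (for non-uniform rules this is immediate from the definition, since $q_P$ can be tuple-dependent, and the statement's hypothesis "$q_P > n/2$ for all $P$" is naturally read as "for all tuples of all relevant symbols"). Finally, one must check that when $q = 0$ the rule accepts everything and trivially violates any non-vacuous functional dependency, which is consistent with $0 \le n/2$; and the value constraint on the other coordinates can always be met by choosing $\vec u, \vec u'$ inside an appropriate finite domain, so building the witnessing profile is routine once the counting is set up.
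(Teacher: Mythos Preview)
Your proposal is correct and follows essentially the same argument as the paper: both directions hinge on the observation that two disjoint coalitions of size $q_P$ supporting the conflicting tuples $\vec u$ and $\vec u'$ can be accommodated among $n$ agents iff $2q_P \le n$, i.e., iff $q_P \le n/2$. The paper compresses the two directions into a single pigeonhole sentence, while you spell out the necessity direction with an explicit profile and treat the degenerate cases ($q_P=0$, tuple-dependent quotas) more carefully, but the underlying idea is identical.
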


\begin{proof}
By assumption, every instance $D_i$ satisfies the constraint.  That is
for every tuple $(u_1,\dots, u_k)$, either there is a unique
$(u_{k+1},\dots, \allowbreak u_q)$ such that
$(u_1,\dots,u_q)=\vec{u}\in D_i(P)$, or there is none.  Suppose now
that the constraint is falsified by the collective outcome.  That is,
there are $\vec{u} \neq \vec{u}'$ such that both $\vec{u}\in
F(\vec{D})(P)$ and $\vec{u}'\in F(\vec{D})(P)$, and
$\vec{u}$ and $\vec{u}'$ coincide on the first $k$ coordinates.  By
definition of quota rules, this means that at least $q_P$ voters are
such that $\vec{u}\in D_i(P)$, and at least $q_P$ possibly different
voters had $\vec{u}'\in D_i(P)$.  Since each individual can have
either $\vec{u}$ or $\vec{u}'$ in $D_i(P)$, by the pigeonhole
principle this is possible if and only if the quota
$q_P \leq \frac{n}{2}$.
\end{proof} 

As immediate applications of Prop.~\ref{prop1}, the intersection rule
clearly lifts any functional dependency, while the union lifts none.
To see the latter, it is sufficient to consider two database instances
that associates different tuples to the same primary key.

\begin{proposition}\label{prop2}
An aggregation procedure $F$ lifts a value constraint if $F$ is
grounded.
\end{proposition}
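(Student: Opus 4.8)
The plan is to show that if $F$ is grounded, then any value constraint $n_k \in P_v$ satisfied by all individual instances is satisfied by $F(\vec D)$. Recall from the excerpt that $D$ satisfies this constraint iff for every $q$-tuple $\vec u \in D(P)$ we have $u_k \in D(P_v)$, equivalently $D \models \forall x_1 \dots x_q (P(x_1,\dots,x_q) \to P_v(x_k))$.

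First I would take an arbitrary tuple $\vec u \in F(\vec D)(P)$ and aim to show $u_k \in F(\vec D)(P_v)$. By groundedness ($G$), $F(\vec D)(P) \subseteq \bigcup_{i \in \N} D_i(P)$, so there is some agent $i \in \N$ with $\vec u \in D_i(P)$. Since $D_i$ satisfies the value constraint by assumption, we get $u_k \in D_i(P_v)$. Now I need to push this membership up to the collective level; this is where groundedness is again invoked, but in the converse direction, so care is needed.

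The main obstacle is precisely the last step: groundedness gives $F(\vec D)(P_v) \subseteq \bigcup_i D_i(P_v)$, which is the wrong direction — it does not immediately tell us that $u_k$, which lies in $D_i(P_v)$ for some $i$, also lies in $F(\vec D)(P_v)$. The statement as written must therefore be read together with the standing assumption of the paper that value constraints are specified by a fixed interpretation of the auxiliary predicate $P_v$: the list $D(P_v)$ of admissible values is the same across all instances (it is part of the constraint, not agent-supplied data), so $D_i(P_v) = D_j(P_v) = F(\vec D)(P_v)$ for all $i,j$. Under this reading, $u_k \in D_i(P_v) = F(\vec D)(P_v)$, and we are done. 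I would state this assumption explicitly at the start of the proof — that the value predicate $P_v$ is interpreted uniformly and is not modified by aggregation — since it is what makes the clean ``if grounded'' direction go through, and then the argument reduces to the two-line chain above: $\vec u \in F(\vec D)(P) \Rightarrow \vec u \in D_i(P)$ for some $i$ (by $G$) $\Rightarrow u_k \in D_i(P_v)$ (by $D_i \models \phi$) $\Rightarrow u_k \in F(\vec D)(P_v)$ (by uniformity of $P_v$).

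Since $\vec u$ was arbitrary, $F(\vec D)$ satisfies the value constraint, so $F$ lifts it; as $\vec D$ was an arbitrary profile of instances all satisfying $\phi$, $F$ is collectively rational with respect to $\phi$. I would also remark, as a sanity check matching the surrounding text, that this explains why all the grounded rules of Section~\ref{aggregators} (union, intersection, quota rules, dictatorship, oligarchy) lift value constraints while the distance-based rule — the one rule that fails groundedness — need not.
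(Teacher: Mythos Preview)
Your proposal is correct and takes essentially the same approach as the paper's proof, which also begins by assuming $D_i(P_v)=D_j(P_v)$ for all $i,j$ and then uses groundedness on $P$ to find a witness $i$ with $\vec u\in D_i(P)$. You are in fact slightly more careful than the paper: you explicitly flag that the last step requires $F(\vec D)(P_v)$ to coincide with the common $D_i(P_v)$, whereas the paper's written justification (the inclusion $F(\vec D)(P_v)\subseteq\bigcup_i D_i(P_v)$) points in the wrong direction for that conclusion.
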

\begin{proof}
Let $n_k\in D(P_v)$ be a value constraint, where for all $i,j\in \N$,
we have that $D_i(P_v)=D_j(P_v)$.  A grounded aggregation procedure is
such that $F(\vec{D})(P)\subseteq \bigcup_{i \in \N} D_i(P)$. Hence,
for all $\vec{u}\in F(\vec{D})(P)$, there exists an $i\in \N$ such
that $\vec{u}\in D_i(P)$. Since all individual databases satisfy the
value constraint, we have that $u_k\in D_i(P_v)$, and therefore
$u_k\in F(\vec{D})(P_v)\subseteq \bigcup_{i \in \N} D_i(P_v)$, showing
that also $F(\vec{D})(P)$ satisfies the value constraint.
\end{proof}

The converse of the Prop.~\ref{prop2} is not true in general, since a
non-grounded aggregator could be easily devised while still satisfying
a given value constraint.

The last result in this section concerns again quota rules.
\begin{proposition} \label{prop3}
A quota rule lifts a referential constraint $(P_1\to P_2, k)$ iff
$q_{P_2} = 1$.
\end{proposition}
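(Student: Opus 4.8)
The plan is to argue both directions. For the ``if'' direction, assume $q_{P_2} = 1$ and suppose every $D_i$ satisfies the referential constraint $(P_1 \to P_2, k)$; I want to show $F(\vec D)$ does too. Take any $\vec u \in F(\vec D)(P_1)$. By the definition of quota rules, $\vec u \in D_i(P_1)$ for at least one agent $i$ (indeed at least $q_{P_1} \geq 1$ of them). Since $D_i$ satisfies the constraint, there is some $\vec u' \in D_i(P_2)$ matching $\vec u$ on the relevant $k$ attributes. Because $q_{P_2} = 1$, having $\vec u'$ in a single individual database suffices to place $\vec u'$ in $F(\vec D)(P_2)$. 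Hence the witness required by $(P_1 \to P_2, k)$ is present in the aggregate, and $F(\vec D) \models (P_1 \to P_2, k)$.

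For the ``only if'' direction I would argue by contraposition: assume $q_{P_2} \geq 2$ and exhibit a profile of constraint-satisfying instances whose aggregate violates the constraint. The idea is to put a single tuple $\vec u$ into $D_i(P_1)$ for enough agents to meet $q_{P_1}$, and give each of those agents its \emph{own private} witness tuple in $P_2$ — i.e., agent $i$ gets witness $\vec u'_i$, with the $\vec u'_i$ pairwise distinct (this is where the countable domain $U$ is used, to supply fresh elements in the non-key coordinates). Each $D_i$ individually satisfies $(P_1 \to P_2, k)$ since it has a matching $P_2$-tuple. But in the aggregate, $\vec u$ clears the quota $q_{P_1}$ and so $\vec u \in F(\vec D)(P_1)$, while no single witness $\vec u'_i$ is supported by more than one agent, so with $q_{P_2} \geq 2$ none of them reaches $F(\vec D)(P_2)$; one must also ensure no \emph{other} tuple accidentally becomes a witness, which is immediate if the $P_2$-relations are otherwise empty. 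Hence $F(\vec D) \not\models (P_1 \to P_2, k)$. I should also dispose of the degenerate case $q_{P_1} > n$ (the trivial rule always outputting empty $P_1$), where the constraint is lifted vacuously regardless of $q_{P_2}$; the cleanest fix is to read the statement as concerning non-trivial quotas, or to note that the ``only if'' is about rules that can actually populate $P_1$.

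The main obstacle I anticipate is not conceptual but bookkeeping: making sure the counterexample profile genuinely satisfies the constraint in every $D_i$ while the distinctness of the witnesses $\vec u'_i$ is compatible with the arities $q_1, q_2$ and with $k \leq \min(q_1, q_2)$ — in particular that each $\vec u'_i$ agrees with $\vec u$ on exactly the $k$ matched coordinates and differs elsewhere so that no coincidental sharing inflates the support of any single $P_2$-tuple. Handling the edge case where $q_2 > k$ forces the witnesses to have ``free'' coordinates (easy, use fresh domain elements) versus $q_2 = k$ where the witness is fully determined (then all agents are forced to share the \emph{same} witness, and the ``only if'' direction genuinely fails — so one must assume $q_2 > k$, or restrict to this case, which I would flag explicitly).
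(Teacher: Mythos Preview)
Your proof plan is correct and follows essentially the same approach as the paper: for the ``if'' direction, a single agent's witness suffices when $q_{P_2}=1$; for the ``only if'' direction, hand each agent a private $P_2$-witness so that no candidate witness accumulates the support needed for a quota $q_{P_2}\geq 2$. Your treatment is in fact more careful than the paper's brief sketch, and the edge cases you flag (the trivial quota $q_{P_1}>n$, and the degenerate case $q_2=k$ where the witness is fully determined and hence shared) are genuine caveats that the paper's proof does not address.
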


\begin{proof}
Let $\vec{u}\in F(\vec{D})(P_1)$.  Since all the individual databases
satisfy the integrity constraint, we know that for every $i \in \N$
there exists a $\vec{u}_i\in D_i(P_2)$ such that its first $k$
coordinates coincides with the last $k$ coordinates of $P_1$.  Since
all $\vec{u}_i$ are possibly different, they may be supported by one
single individual each. Therefore, the referential constraint is lifted
if and only if the quota relative to $P_2$ is sufficiently small,
i.e., $q_{P_2} = 1$.
\end{proof}

As immediate application of Prop.~\ref{prop3}, intersection and union
rules are included in the results above, since they are quota
rules. As regards distance-based rules, we only remark that they lift
all integrity constraint by their definition, provided that the
minimisation is restricted to consistent databases.

\section{Aggregation and Query Answering}\label{sec:queries}

In this section we analyse one of the most common operation on
databases, i.e., querying, to the light of (rational) aggregation.
Observe that any open formula $\phi(x_1,\dots,x_\ell)$, with free
variables $x_1,\dots,x_\ell$, can be thought of as a
query \cite{AbiteboulHV95}. Evaluating $\phi(x_1,\dots,x_\ell)$ on a
database instance $D$ returns the set $ans(D, \phi)$ of tuples
$\vec{u}=(u_1,\dots,u_\ell)$ such that the assignment $\sigma$, with
$\sigma(x_i) = u_i$ for $i \leq \ell$, satisfies $\phi$, that is,
$(D, \sigma) \models \phi$. Hereafter, with an abuse of notation, we
often write simply $(D, \vec{u}) \models \phi$.
Given the relevance of query answering in database theory, the
following question is of obvious interest.
\begin{question}
What is the relationship between the answer $ans(F(\vec{D}), \phi)$ to
query $\phi$ on the aggregated database $F(\vec{D})$, and the answers
$ans(D_1, \phi), \allowbreak \ldots, ans(D_n, \phi)$ to the same query
on each instance $D_1, \ldots, D_n$?
\end{question}

Clearly, given a query $\phi$, every aggregator $F$ on database
instances induces an aggregation procedure $F^*$ on the query answers,
as illustrated by the following diagram, where $D = F(\vec{D})$:
\begin{center}
\begin{tikzpicture}[auto, node distance=2cm, ->, >=stealth', shorten
>=1pt, semithick]

\tikzstyle{every place/.style}=[fill=white, text=black, minimum size=15pt]
\tikzstyle{every initial by arrow}=[initial text=]
\node 	(s0)   	{$D_1, \ldots, D_n$};
\node    (s1) 	[right of=s0, node distance=5cm]	{$D$};
\node    (s2) 	[below of=s0]	{$ans(D_1, \phi), \ldots, ans(D_n, \phi)$};
\node    (s3) 	[below of=s1]	{$ans(D,\phi)$};
\path 	(s0) 	edge node[above] {$F$} (s1);
\path 	(s0) 	 edge node[left] {$\phi$} (s2);
\path 	(s1) 	 edge node[right] {$\phi$} (s3);
\path 	(s2) 	 edge node[above] {$F^*$} (s3);
\end{tikzpicture}
\end{center}

Hereafter we consider some examples to illustrate this question.
\begin{example} \label{ex1}
If we assume intersection as the aggregation procedure, it is easy to
check that in general the answer to a query in the aggregated database
is not the intersection of the answers for each single instance. To
see this, let $D_1(P) = \{ (a,b) \}$ and $D_2(P) = \{ (a,d) \}$ and
consider query $\phi = \exists y P(x,y)$. Clearly, $ans(D_1 \cap
D_2, \phi)$ is empty, while $ans(D_1, \phi) \cap ans(D_2, \phi)
= \{a\}$.  Hence, in general $\bigcap_{i \in \N}
ans(D_i, \phi) \not \subseteq ans(\bigcap_{i \in \N} D_i, \phi)$.
The converse can also be the case. Consider instances $D_1$,
$D_2$ such that $D_1(P) = \{(a,a), (a,b)\}$, $D_1(R) = \{ c \}$, and
$D_2(P) = \{(a,a), (a,b)\}$, $D_2(R) = \{ d \}$, 
with query $\phi = \forall y P(x,y)$. The intersection
$ans(D_1, \phi) \cap ans(D_2, \phi)$ of answers is empty.
However the answer w.r.t.~the intersection of databases is
$ans(D_1 \cap D_2, \phi) = \{ a \}$, since the active domain of the
intersection only includes elements $a$ and $b$.  As a result, in
general $ans(\bigcap_{i \in \N} D_i, \phi) \not \subseteq
\bigcap_{i \in \N} ans(D_i, \phi)$.
\end{example}

Similar arguments can be used to show that the union of answers is in
general different from the answer on the union of instances.

These examples shows that it is extremely difficult to find
aggregators that commute for any first-order query
$\phi \in \L_{\D}$. Hence, they naturally raise the question of
syntactic restrictions on queries such that the aggregation procedure
$F^* = \phi \circ F \circ \phi^{-1}$ on answers can be expressed
explicitly in terms of $F$ (e.g., the intersection of answers is the
answer to the query on the intersection):
\begin{question} \label{quest1}
Given aggregation procedures $F$ and $F^*$, is there a restriction of
the query language for $\phi$ such that the diagram above commute?
\end{question}

This problem is related to the following, more general question.
\begin{question} \label{quest2}
  Given an aggregation procedure $F$ and a query 
  language $\L$, what is the aggregation procedure $F^*$?  Can $F^*$
be represented explicitly?
\end{question}


The following result provides a first, partial answer to
Question~\ref{quest1}, in the case $F$ and $F^*$ are unions.
\begin{lemma}[Existential Fragment] \label{existential}
Consider the positive existential fragment $\L^+_{\exists}$ of
first-order logic defined as follows:
\begin{eqnarray*}
\phi  & ::= &  P(x_1, \ldots, x_q) \mid 
\phi \lor \phi \mid \exists x \phi
\end{eqnarray*}
The language $\L^+_{\exists}$ is lifted by unions, that is, for $F$
and $F^*$ equal to set-theoretical union, the diagram commutes for the
query language $\L^+_{\exists}$.
\end{lemma}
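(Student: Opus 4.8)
The plan is to prove, by induction on the structure of a formula $\phi \in \L^+_{\exists}$, the following claim: for every profile $\vec{D} = (D_1, \ldots, D_n)$ and every tuple $\vec{u}$ of elements of $U$,
\[
(\, {\textstyle\bigcup_{i \in \N}} D_i,\ \vec{u}\,) \models \phi
\quad\Longleftrightarrow\quad
\bigvee_{i \in \N} (D_i, \vec{u}) \models \phi,
\]
i.e., $ans(\bigcup_i D_i, \phi) = \bigcup_i ans(D_i, \phi)$. This is exactly the statement that the diagram commutes when $F = F^* = {}$union, since $F^*$ applied to $(ans(D_1,\phi),\dots,ans(D_n,\phi))$ is their set-theoretic union.

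The induction has three cases. For the atomic case $\phi = P(x_1,\dots,x_q)$: by definition of the union aggregator, $\vec{u} \in (\bigcup_i D_i)(P) = \bigcup_i D_i(P)$ iff $\vec{u} \in D_i(P)$ for some $i$, which is the claim. For the disjunction case $\phi = \psi_1 \lor \psi_2$: by the semantics, $(\bigcup_i D_i, \vec{u}) \models \phi$ iff $(\bigcup_i D_i, \vec u) \models \psi_1$ or $(\bigcup_i D_i, \vec u) \models \psi_2$; apply the induction hypothesis to each disjunct and then commute the two existential quantifiers over $i$ (disjunction is associative and commutative), obtaining $\bigvee_i\big((D_i,\vec u)\models\psi_1 \text{ or } (D_i,\vec u)\models\psi_2\big) = \bigvee_i (D_i,\vec u)\models\phi$. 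The interesting case is $\phi = \exists x\, \psi$. Here the active-domain semantics matters: $(\bigcup_i D_i, \vec u) \models \exists x\,\psi$ iff there is $w \in \adom(\bigcup_i D_i)$ with $(\bigcup_i D_i, \vec u{}^x_w) \models \psi$. One shows $\adom(\bigcup_i D_i) = \bigcup_i \adom(D_i)$ directly from the definition of active domain and of the union aggregator. So the witness $w$ lies in $\adom(D_j)$ for some $j$. Using the induction hypothesis on $\psi$ (with the extended tuple), $(\bigcup_i D_i, \vec u{}^x_w)\models\psi$ iff $(D_i, \vec u{}^x_w)\models\psi$ for some $i$; combined with $w \in \bigcup_j \adom(D_j)$, an element-chasing argument gives that there is a single index $i$ with $w \in \adom(D_i)$ and $(D_i, \vec u{}^x_w)\models\psi$, which is precisely $\bigvee_i (D_i,\vec u)\models\exists x\,\psi$.

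The main obstacle — and the reason the fragment must be both positive and existential — is the quantifier case together with the active-domain semantics. The "only if" direction of the existential step is the delicate one: from a witness $w$ in the (large) active domain of the union and a satisfying index $i$ for $\psi$ at $w$, one needs $w$ to be in the active domain of that same $D_i$. This works because $\L^+_\exists$ has no negation and no universal quantifier, so satisfaction is monotone under adding tuples — a point worth isolating as an auxiliary observation (if $D(P) \subseteq D'(P)$ for all $P$ and $(D,\sigma)\models\psi$ for $\psi \in \L^+_\exists$, then $(D',\sigma)\models\psi$, provided the relevant witnesses stay in $\adom(D')$, which they do since $\adom(D)\subseteq\adom(D')$). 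Monotonicity immediately gives the "if" direction of all cases as well ($D_i \subseteq \bigcup_j D_j$), so in fact only the "only if" direction of the atomic and existential cases requires genuine work, and for the atomic case it is trivial. I would state the monotonicity observation first, then run the induction, spending the bulk of the writeup on verifying $\adom(\bigcup_i D_i) = \bigcup_i \adom(D_i)$ and the witness-chasing in the existential case.
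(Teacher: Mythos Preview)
Your proposal is correct and follows essentially the same approach as the paper: structural induction on $\phi$, with the atomic and disjunction cases routine and the existential case handled by appealing to positivity of $\L^+_\exists$ to argue that the witness $w$ may be taken in $\adom(D_i)$ for the same index $i$ delivered by the induction hypothesis. The only difference is organisational---you isolate monotonicity as a separate observation to dispatch the $\supseteq$ direction uniformly, whereas the paper argues both directions within each inductive case---but the substance of the argument is the same.
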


\begin{proof}
The proof is by induction on the structure of query $\phi$.
For atomic $\phi = P(x_1, \ldots, x_q)$, $\vec{u} \in
ans(\bigcup_{i \in \N} D_i,\phi)$ iff $(\bigcup_{i \in \N}
D_i, \vec{u}) \models \phi$, iff for some $i \in \N$, $(D_i,
\vec{u}) \models \phi$, iff $\vec{u} \in ans(D_i,\phi)$  for some $i \in \N$,
iff $\vec{u} \in \bigcup_{i \in \N} ans(D_i,\phi)$.


For $\phi = \psi \lor \psi'$, $\vec{u} \in ans(\bigcup_{i \in \N}
  D_i,\phi)$ iff $(\bigcup_{i \in \N} D_i, \vec{u}) \models \phi$, iff
  $(\bigcup_{i \in \N} D_i, \vec{u}) \models \psi$ or
  $(\bigcup_{i \in \N} D_i, \vec{u}) \models \psi'$, iff for some $i,
  j \in \N$, $(D_i, \vec{u}) \models \psi$ or
  $(D_j, \vec{u}) \models \psi'$ by induction hypothesis. In
  particular, we have both $(D_i, \vec{u}) \models \psi \lor \psi'$
  and $(D_j, \vec{u}) \models \psi \lor \psi'$, that is,
  $\vec{u} \in \bigcup_{i \in \N} ans(D_i,\phi)$.  On the other hand,
  $\vec{u}
\in \bigcup_{i \in \N} ans(D_i,\phi)$ iff $\vec{u} \in ans(D_i,\phi)$
for some $i \in \N$, iff $(D_i, \vec{u}) \models \psi$ or $(D_i,
\vec{u}) \models \psi'$. In both cases, by induction hypothesis $(\bigcup_{i \in \N} D_i,
\vec{u}) \models \phi$, that is, $\vec{u} \in ans(\bigcup_{i \in \N}
  D_i,\phi)$.

For $\phi = \exists x \psi$, $\vec{u} \in ans(\bigcup_{i \in \N}
D_i,\phi)$ iff $(\bigcup_{i \in \N} D_i, \vec{u}) \models \phi$, iff
for some $u \in adom(\bigcup_{i \in \N} D_i)$, $(\bigcup_{i \in \N}
D_i, \vec{u} \cdot u) \models \psi$, and therefore for some $i,
j \in \N$, $u \in adom(D_j)$ and $(D_i, \vec{u} \cdot
u) \models \psi$.  Notice that if $(D_i, \vec{u} \cdot
u) \models \psi$, then $u \in adom(D_i)$ as well, as $\phi$ belongs to
the positive (existential) fragment of first-order logic.
Hence, for some $i \in \N$, $u \in adom(D_i)$ and $(D_i, \vec{u} \cdot
u) \models \psi$,  that is,
$\vec{u} \in ans(D_i,\phi)$ for some $i \in \N$. 
On the other hand, $\vec{u} \in \bigcup_{i \in \N} ans(D_i,\phi)$ iff
$\vec{u} \in ans(D_i,\phi)$ for some $i \in \N$, iff $u \in adom(D_i)$
and $(D_i, \vec{u} \cdot u) \models \psi$, that is, $u \in
adom(\bigcup_{i \in \N} D_i)$ and $(\bigcup_{i \in \N}
D_i, \vec{u} \cdot u) \models \psi$ by induction hypothesis. Hence,
$\vec{u} \in ans(\bigcup_{i \in \N} D_i,\phi)$.
\end{proof}

By Lemma~\ref{existential} queries in $\L^+_{\exists}$ are preserved
whenever both $F$ and $F^*$ are unions. Obviously, it would be of
interest to find what is the largest fragment $\L'$ of first-order
logic preserved by unions. By the results in this section we know that
$\L^+_{\exists} \subseteq \L' \subset \L_{\D}$.

Further, we may wonder whether a result symmetric to
 Lemma~\ref{existential} holds for intersections and the positive
 universal fragment $\L^+_{\forall}$ of first-order logic defined as
 follows:
\begin{eqnarray*}
\phi  & ::= &  P(x_1, \ldots, x_q) \mid 
 \phi \land \phi \mid \forall x \phi
\end{eqnarray*}

Unfortunately, in Example~\ref{ex1} we provided a formula $\phi
= \forall y P(x, y)$ in $\L^+_{\forall}$ and instances $D_1$, $D_2$
such that $ans(D_1 \cap D_2, \phi) \not \subseteq ans(D_1, \phi) \cap
ans(D_2, \phi)$. Hence, for $F$ and $F^*$ equal to set-theoretical
intersection, the diagram above does not commute for the query
language $\L^+_{\forall}$.

Nonetheless, we are able to prove a weaker but still significant
result related to Question~\ref{quest2}. Specifically, the next lemma
shows that
if in the diagram above $F$ is the intersection and the query language
is $\L^+_{\forall}$, then $F^*$ is unanimous, in the sense that
$\bigcap_{i \in \N} ans(D_i, \phi) \subseteq ans(\bigcap_{i \in \N}
D_i),\phi)$.
\begin{lemma} \label{universal}
Let the aggregator $F$ be the intersection and let the query language
be $\L^+_{\forall}$. Then, the lifted aggregator $F^*$ is unanimous.
\end{lemma}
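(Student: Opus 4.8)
The plan is to prove this by induction on the structure of the query $\phi \in \L^+_{\forall}$, showing that for every $\phi$ and every tuple $\vec{u}$, if $(D_i, \vec{u}) \models \phi$ for all $i \in \N$, then $(\bigcap_{i \in \N} D_i, \vec{u}) \models \phi$. The subtlety, and the reason the statement is only ``unanimity'' rather than full commutation, is the active-domain semantics: the intersection $\bigcap_{i \in \N} D_i$ has a smaller active domain than any individual $D_i$, so universal quantifiers become \emph{easier} to satisfy on the intersection, which is exactly why we get the one-sided inclusion $\bigcap_i ans(D_i,\phi) \subseteq ans(\bigcap_i D_i, \phi)$ and not equality (cf.\ Example~\ref{ex1}).

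\textbf{Base case.} For atomic $\phi = P(x_1, \ldots, x_q)$: if $(D_i, \vec{u}) \models P(\vec{x})$ for all $i$, then $\vec{u} \in D_i(P)$ for all $i$, hence $\vec{u} \in \bigcap_{i \in \N} D_i(P) = (\bigcap_{i \in \N} D_i)(P)$, so $(\bigcap_{i \in \N} D_i, \vec{u}) \models \phi$. (Note this direction is clean; it is the other direction that fails, since $\vec{u}$ in the intersection does require membership in every $D_i(P)$ — so atomic queries actually do commute; the failure is entirely due to quantifiers.)

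\textbf{Inductive steps.} For $\phi = \psi \land \psi'$: if $(D_i, \vec{u}) \models \psi \land \psi'$ for all $i$, then $(D_i, \vec{u}) \models \psi$ for all $i$ and $(D_i, \vec{u}) \models \psi'$ for all $i$; apply the induction hypothesis to each conjunct separately to conclude $(\bigcap_{i \in \N} D_i, \vec{u}) \models \psi$ and $(\bigcap_{i \in \N} D_i, \vec{u}) \models \psi'$, hence the conjunction holds. For $\phi = \forall x\, \psi$: assume $(D_i, \vec{u}) \models \forall x\, \psi$ for all $i$, i.e., for every $i$ and every $u \in \adom(D_i)$, $(D_i, \vec{u}\cdot u) \models \psi$. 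We must show for every $u \in \adom(\bigcap_{i \in \N} D_i)$ that $(\bigcap_{i \in \N} D_i, \vec{u}\cdot u) \models \psi$. Fix such a $u$. Since $\adom(\bigcap_{i \in \N} D_i) \subseteq \adom(D_i)$ for every $i$ (an element occurring in a tuple of $(\bigcap_i D_i)(P) = \bigcap_i D_i(P)$ occurs in a tuple of $D_i(P)$), we have $u \in \adom(D_i)$ for all $i$, and therefore $(D_i, \vec{u}\cdot u) \models \psi$ for all $i$. By the induction hypothesis applied to $\psi$ with the extended tuple $\vec{u}\cdot u$, we get $(\bigcap_{i \in \N} D_i, \vec{u}\cdot u) \models \psi$, as required. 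This completes the induction, and restating it in terms of answer sets gives $\bigcap_{i \in \N} ans(D_i, \phi) \subseteq ans(\bigcap_{i \in \N} D_i, \phi)$, i.e., $F^*$ is unanimous.

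\textbf{The main obstacle} — really the only point requiring care — is the quantifier case, specifically making explicit the inclusion $\adom(\bigcap_{i \in \N} D_i) \subseteq \adom(D_i)$ and checking that it points in the direction that \emph{helps}: the witness domain over which we must verify $\psi$ on the intersection is a \emph{subset} of each individual witness domain, so the universal hypotheses on the $D_i$ are strong enough. One should also note explicitly why the analogous argument would break for $\exists$ (the witness $u$ guaranteed in each $D_i$ need not lie in the common active domain, and may differ across agents), which is consistent with the failure exhibited in Example~\ref{ex1} and explains why we only claim unanimity of $F^*$ rather than commutation.
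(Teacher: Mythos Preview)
Your proof is correct and follows essentially the same approach as the paper: both argue by structural induction on $\phi \in \L^+_{\forall}$, with identical treatments of the atomic and conjunction cases, and both handle the universal quantifier via the inclusion $\adom(\bigcap_{i \in \N} D_i) \subseteq \adom(D_i)$. Your version is slightly more explicit about justifying this inclusion and adds useful commentary on why only one direction holds, but the argument is the same.
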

\begin{proof}
  We prove that $\bigcap_{i \in \N} ans(D_i, \phi) \subseteq
  ans(\bigcap_{i \in \N} D_i,\phi)$.  So, if
  $\vec{u} \in \bigcap_{i \in \N} ans(D_i, \phi)$ then for every
  $i \in \N$, $(D_i, \vec{u}) \models \phi$.  We now prove by
  induction on $\phi \in \L^+_{\forall}$ that if for every $i \in \N$,
  $(D_i, \vec{u}) \models \phi$, then $(\bigcap_{i \in \N}
  D_i, \vec{u}) \models \phi$.  As to the base case for $\phi =
  P(x_1, \ldots, x_q)$ atomic, $(D_i, \vec{u}) \models P(x_1, \ldots,
  x_q)$ iff $\vec{u} \in D_i(P)$ for every $i \in \N$. In particular,
  $\vec{u} \in \bigcap_{i \in \N} D_i(P)$ as well, and therefore
  $(\bigcap_{i \in \N} D_i, \vec{u}) \models P(x_1, \ldots, x_q)$.  As
  to the inductive case for $\phi = \psi \land \psi'$, suppose that
  $(D_i, \vec{u}) \models \phi$, that is,
  $(D_i, \vec{u}) \models \psi$ and $(D_i, \vec{u}) \models \psi$ for
  every $i \in \N$. By induction hypothesis we obtain that
  $(\bigcap_{i \in \N} D_i, \vec{u}) \models \psi$ and
  $(\bigcap_{i \in \N} D_i, \vec{u}) \models \psi'$, i.e.,
  $(\bigcap_{i \in \N} D_i, \vec{u}) \models \phi$.  Finally, if
  $(D_i, \vec{u}) \models \forall x \psi$ for every $i \in \N$, then
  for all $v \in \adom(D_i)$, $(D_i, \vec{u} \cdot
  v) \models \psi$. In particular, for all
  $v \in \adom(\bigcap_{i \in \N} D_i)$, $(D_i, \vec{u} \cdot
  v) \models \psi$ for every $i \in \N$, and by induction hypothesis,
  for all $v \in \adom(\bigcap_{i \in \N} D_i)$, $(\bigcap_{i \in \N}
  D_i, \vec{u} \cdot v) \models \psi$, i.e., $(\bigcap_{i \in \N}
  D_i, \vec{u}) \models \forall x \psi$.  As a result, $\vec{u} \in
  ans(\bigcap_{i \in \N} D_i, \phi)$.
\end{proof}

A result symmetric to Lemma~\ref{universal} holds for language
 $\L^+_{\exists}$ and unions:
\begin{lemma} \label{existential2}
Let the aggregator $F$ be the union and let the query language
be $\L^+_{\exists}$. Then, the lifted aggregator $F^*$ is grounded.
\end{lemma}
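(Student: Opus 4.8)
The plan is to mirror the proof of Lemma~\ref{universal}, this time establishing the inclusion $ans(\bigcup_{i \in \N} D_i, \phi) \subseteq \bigcup_{i \in \N} ans(D_i, \phi)$, which is exactly what it means for the lifted aggregator $F^*$ to be grounded: since $F$ is the union, the output of $F^*$ on the profile of answer sets is $ans(\bigcup_{i \in \N} D_i, \phi)$, so groundedness asks that this set be contained in the union of the inputs $ans(D_i,\phi)$. In fact this inclusion is already one half of the equality proved in Lemma~\ref{existential}, so the statement can be obtained as an immediate corollary; but since the query language here is again the positive fragment, I would give the direct structural induction for completeness and symmetry with Lemma~\ref{universal}.

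First I would fix a profile $\vec D$ and a tuple $\vec u \in ans(\bigcup_{i \in \N} D_i, \phi)$, i.e. $(\bigcup_{i \in \N} D_i, \vec u) \models \phi$, and prove by induction on $\phi \in \L^+_{\exists}$ that there is some $j \in \N$ with $(D_j, \vec u) \models \phi$. For the base case $\phi = P(x_1,\dots,x_q)$, $(\bigcup_{i \in \N} D_i, \vec u) \models \phi$ means $\vec u \in \bigcup_{i \in \N} D_i(P)$, hence $\vec u \in D_j(P)$ for some $j$, i.e. $(D_j, \vec u) \models \phi$. For $\phi = \psi \lor \psi'$, one of the disjuncts holds at $\vec u$ in $\bigcup_{i \in \N} D_i$; applying the induction hypothesis to that disjunct yields a single $j$ satisfying it in $D_j$, whence $(D_j, \vec u) \models \psi \lor \psi'$. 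For $\phi = \exists x \psi$, there is $v \in \adom(\bigcup_{i \in \N} D_i)$ with $(\bigcup_{i \in \N} D_i, \vec u \cdot v) \models \psi$; the induction hypothesis gives a $j$ with $(D_j, \vec u \cdot v) \models \psi$, and since $\psi$ lies in the positive existential fragment, $(D_j, \vec u \cdot v) \models \psi$ forces $v \in \adom(D_j)$ (the same observation used in the proof of Lemma~\ref{existential}), so $(D_j, \vec u) \models \exists x \psi$. In every case we conclude $\vec u \in ans(D_j,\phi) \subseteq \bigcup_{i \in \N} ans(D_i,\phi)$, which is the desired inclusion, hence $F^*$ is grounded.

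The only delicate point is the existential step, and specifically the claim that a witness $v$ making $\psi$ true in the union must already lie in the active domain of one of the individual instances $D_j$ --- this is precisely what the restriction to the positive fragment buys us, exactly as in Lemma~\ref{existential}: with no negation available, every variable must be ``positively witnessed'' through some atom, so any assignment satisfying the formula maps its free variables into the active domain. I would isolate this once as a small auxiliary observation (``if $\phi \in \L^+_{\exists}$ and $(D, \vec u) \models \phi$ then every component of $\vec u$ lies in $\adom(D)$'', proved by an easy induction) and then invoke it in the existential case above. Apart from this, the argument is a routine structural induction with no further obstacles; alternatively one could simply cite Lemma~\ref{existential}, of which this is the $\subseteq$ half.
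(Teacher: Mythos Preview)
Your proposal is correct and follows essentially the same route as the paper: the same target inclusion $ans(\bigcup_{i\in\N} D_i,\phi)\subseteq \bigcup_{i\in\N} ans(D_i,\phi)$, the same structural induction on $\phi\in\L^+_{\exists}$ with the same three cases, and the same use of positivity in the existential step to ensure the witness lies in $\adom(D_j)$. Your added remark that this is already the $\subseteq$ half of Lemma~\ref{existential}, and your suggestion to isolate the active-domain observation as a small auxiliary fact, are helpful clarifications that the paper leaves implicit.
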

\begin{proof}
  We prove that $ans(\bigcup_{i \in \N} D_i,\phi)  \subseteq
  \bigcup_{i \in \N} ans(D_i, \phi)$.  So, if
  $\vec{u} \in ans(\bigcup_{i \in \N} D_i,\phi)$ then 
$(\bigcup_{i \in \N} D_i, \vec{u}) \models \phi$.  We now prove by
  induction on $\phi \in \L^+_{\forall}$ that if $(\bigcup_{i \in \N}
  D_i, \vec{u}) \models \phi$, then for some $i \in \N$,
  $(D_i, \vec{u}) \models \phi$.  As to the base case for $\phi =
  P(x_1, \ldots, x_q)$ atomic, $(\bigcup_{i \in \N}
  D_i, \vec{u}) \models P(x_1, \ldots, x_q)$ iff
  $\vec{u} \in \bigcup_{i \in \N} D_i(P)$, iff $\vec{u} \in D_i(P)$
  for some $i \in \N$. In particular, $(D_i, \vec{u}) \models
  P(x_1, \ldots, x_q)$ as well, and therefore
  $\vec{u} \in \bigcup_{i \in \N} ans(D_i, \phi)$.  As to the
  inductive case for $\phi = \psi \lor \psi'$, suppose that
  $(\bigcup_{i \in \N} D_i, \vec{u}) \models \phi$, that is,
  $(\bigcup_{i \in \N} D_i, \vec{u}) \models \psi$ or
  $(\bigcup_{i \in \N} D_i, \vec{u}) \models \psi'$. In the first
  case, by induction hypothesis we have that
  $\vec{u} \in \bigcup_{i \in \N} ans(D_i, \psi)$
  i.e., for some $i \in \N$, $(D_i, \vec{u}) \models \psi$, and
  therefore $(D_i, \vec{u}) \models \phi$. Hence, $\vec{u} \in
  ans(D_i, \phi)$ for some $i \in \N$, that is,
  $\vec{u} \in \bigcup_{i \in \N} ans(D_i, \phi)$. The case for
  $(\bigcup_{i \in \N} D_i, \vec{u}) \models \psi'$ is symmetric.
  Finally, if
  $(\bigcup_{i \in \N} D_i, \vec{u}) \models \exists x \psi$, then
  for some $v \in \adom(\bigcup_{i \in \N} D_i)$, $(\bigcup_{i \in \N} D_i, \vec{u} \cdot
  v) \models \psi$. In particular, 
by induction hypothesis, $\vec{u} \cdot v \in \bigcup_{i \in \N}
ans(D_i, \psi)$, that is, $( D_i, \vec{u} \cdot v) \models \psi$ for
some $i \in \N$. Further, since $\psi$ is a positive formula,
$v \in \adom(D_i)$, and therefore, $(D_i, \vec{u} \cdot
v) \models \phi$, i.e., $\vec{u} \in \bigcup_{i \in \N}
ans(D_i, \phi)$.
\end{proof}

To conclude this section we discuss the results obtain so far.  We
said that Lemma~\ref{existential} can be seen as a (partial) answer to
Question~\ref{quest1}. Similarly, Lemma~\ref{universal}
and \ref{existential2} are related to Question~\ref{quest2}.  Results
along the lines of Lemmas~\ref{existential}-\ref{existential2} may
find application in efficient query answering: it might be that in
some cases, rather than querying the aggregated database $F(\vec{D})$,
it is more efficient to query the individual instances $D_1, \ldots,
D_n$ and then aggregate the answers.  In such cases it is crucial to
know which answers are preserved by the different aggregation
procedures. The results provided in this section aimed to be a first,
preliminary step in this direction.

%





\section{Conclusions and Related Work}\label{sec:conclusions}

In this paper we have proposed a framework for the aggregation of
conflicting information coming from multiple sources in the form of
finite relational databases.  We proposed a number of aggregators
inspired by the literature on social choice theory, and adapted a
number of axiomatic properties.  We then focused on two natural
questions which arise when dealing with the aggregation of
databases. First, we studied what languages for integrity constraints
are lifted by some of the rules we proposed, i.e., what constraints
are true in the aggregated database supposing that all individual
input satisfies the same constraints.  Second, we investigated
first-order query answering in the aggregated databases,
characterising some languages for which the aggregation of the answers
to the individual databases corresponds to the answer to the query on
the aggregated database.

Our initial results shed light on the possible use of choice-theoretic
techniques in the database merging and integration, and opens multiple
interesting directions for future research.  In particular, the
connections to the literature on aggregation and merging can be
investigated further.  First, section~\ref{sec:characterisations}
showcased results for which database aggregation behaves similarly to
binary aggregation with integrity constraints
(see \cite{GrandiEndrissAIJ2013}), but pointed out at some crucial
differences.  In particular, there are natural classes of integrity
constraints used in databases for which the equivalent in
propositional logic, the language of choice for binary aggregation,
would be tedious and lenghty. We were able to provide initial results
on their preservation through aggregation.  Second, the recent work of
Endriss and Grandi \cite{EndrissGrandiAIJ2017} is also strongly
related to our contribution. Since graphs are a specific type of
relational structures, our work directly generalise their graph
aggregation framework to relations of arbitrary arity. However, the
specificity of their setting allows them to obtain very powerful
impossibility results, which are yet to be explored in the area of
database aggregation.  Third, to the best of our knowledge the problem
of aggregated query answering is new in the literature on aggregation,
albeit a similar problem has been studied in the aggregation of
argumentation graphs \cite{ChenEndrissTARK2017}, a setting closer to
that of graph aggregation. Also this direction deserves further
investigation.

\end{document}